\title{Estimating population average treatment effects from experiments with noncompliance
}
\author{Kellie N. Ottoboni \hspace{10mm}
Jason V. Poulos\thanks{\textbf{Corresponding author: Jason V. Poulos:} Department of Statistical Science, Duke University and SAMSI, Durham, North Carolina, 27708, United States; Email: \href{mailto:jason.poulos@duke.edu}{\nolinkurl{jason.poulos@duke.edu}}
	\newline
\textbf{Kellie N. Ottoboni:} Department of Statistics, University of California, Berkeley, California, 94720, United States}
\vspace{15mm}} 
\date{ }
\newcolumntype{Y}{>{\raggedleft\arraybackslash}X}
\newtheorem{theorem}{Theorem}
\newtheorem*{assumption*}{\assumptionnumber}
\providecommand{\assumptionnumber}{}
\newenvironment{assumption}[2]
 {%
  \renewcommand{\assumptionnumber}{Assumption #1}%
  \begin{assumption*}%
  \protected@edef\@currentlabel{#1}%
 }
 {%
  \end{assumption*}
 }
\newcommand{\ind}{\mathbb{I}} 
\newcommand{\pr}{\mathbb{P}} 
\newcommand{\ex}{\mathbb{E}} 
\newcommand{\cov}{\mathrm{Cov}}
\newcommand\independent{\protect\mathpalette{\protect\independenT}{\perp}}
\def\independenT#1#2{\mathrel{\rlap{$#1#2$}\mkern2mu{#1#2}}}
\begin{document}
\begin{singlespace} 
\maketitle  
\end{singlespace}

\thispagestyle{empty}
\begin{abstract}  
\noindent 
Randomized control trials (RCTs) are the gold standard for estimating causal effects, but often use samples that are non-representative of the actual population of interest. We propose a reweighting method for estimating population average treatment effects in settings with noncompliance. Simulations show the proposed compliance-adjusted population estimator outperforms its unadjusted counterpart when compliance is relatively low and can be predicted by observed covariates. We apply the method to evaluate the effect of Medicaid coverage on health care use for a target population of adults who may benefit from expansions to the Medicaid program. We draw RCT data from the Oregon Health Insurance Experiment, where less than one-third of those randomly selected to receive Medicaid benefits actually enrolled.\\ \\
\noindent
\textbf{Keywords:} Causal inference, external validity, health insurance, observational studies, noncompliance, randomized controlled trials
\end{abstract}	

\pagebreak
\setcounter{page}{1} 

\vspace{20mm}

\section{Introduction}\label{intro}
Randomized control trials (RCTs) are the gold standard for estimating the causal effect of a treatment. An RCT may give unbiased estimates of sample average treatment effects, but external validity is an issue when RCT participants are unrepresentative of the actual population of interest. For example, participants in an RCT in which individuals volunteer to sign up for health insurance may be in poorer health at baseline than the overall population. External validity is particularly relevant to policymakers who require information on how the treatment effect would generalize to the broader population. 

A new research frontier in causal inference focuses on developing methods for extrapolating RCT results to a population \cite{ImaKinStu08,stuart2011use,Hartman}. Existing approaches to this problem are based in settings where there is full compliance with treatment; however, noncompliance is a prevalent issue in RCTs. Noncompliance occurs when individuals who are assigned to the treatment group do not accept the treatment. For individuals assigned to control, we are unable to observe who would have complied had they been assigned treatment. Noncompliance biases the intention--to--treat (ITT) estimate of the effect of treatment assignment toward zero.

We propose a reweighting method for estimating complier--average causal effects for the target population from RCT data with noncompliance, and refer to this estimator as the Population Average Treatment Effect on Treated Compliers (PATT-C). We model compliance in the RCT in order to predict the likely compliers in the RCT control group. Assuming that the response surface is the same for compliers in the RCT and population members who received treatment, we then predict the response surface for all RCT compliers and use the predicted values from the response surface model to estimate the potential outcomes of population members who received treatment, given their covariates. 

Our approach for estimating PATT-C differs from previous reweighting methods because we only need to estimate the potential outcomes for RCT compliers and we cannot observe who in the control group would have complied had they been assigned treatment. In Stuart et al. \cite{stuart2011use}, for example, a propensity score model is used to predict participation in the RCT, given pretreatment covariates common to both the RCT and population data. Individuals in the RCT and population are then weighted according to the inverse of the estimated propensity score. Hartman et al. \cite{Hartman} propose a method of reweighting the responses of individuals in an RCT according to the covariate distribution of the population. Reweighting methods typically leverage exchangeability of potential outcomes between the covariate-adjusted treated and controls in the RCT. In our approach, the potential outcomes between the complier treated and complier controls are not exchangeable by design, since we need to assume we know the compliance model.

When estimating the average causal effect for compliers from an RCT, researchers typically scale the estimated ITT effect by the compliance rate, assuming that there is only single crossover from treatment to control.\footnote{Alternative methods for compliance-adjusting sample estimates include estimating sharp bounds to the ITT effect in the presence of noncompliance \cite{balke1997bounds,imai2013experimental}; adjustment for treatment noncompliance using principal stratification \cite{frangakis2002principal,frumento2012evaluating}; and maximum-likelihood and Bayesian inferential methods \cite{yau2001inference}.} When extrapolating RCT results to a population, one might simply reweight the ITT effect according to the covariate distribution of the population and then divide by the proportion of treated compliers in the population in order to yield a population average effect of treatment on treated compliers. However, we do not observe the population compliance rate, and it is likely to differ across subgroups based on pretreatment covariates. By explicitly modeling compliance, our approach allows researchers to decompose population estimates by covariate group, which is useful for policymakers in evaluating the efficacy of policy interventions for subgroups of interest in a population. 

We apply the PATT-C estimator to measure the effect of Medicaid coverage on health care use for a target population of adults who may benefit from government-backed expansions to the Medicaid program. We are particularly interested in measuring the effect of Medicaid on emergency room (ER) use because it is the main delivery system through which the uninsured receive health care, and the uninsured could potentially receive higher quality health care through primary care visits. An important policy question is whether Medicaid expansions will decrease ER utilization and increase primary care visits by the previously uninsured. We draw RCT data from a large-scale health insurance experiment where less than one-third of those randomly selected to receive Medicaid benefits actually enrolled.

The paper proceeds as follows: Section~\ref{estimator} presents the proposed estimator and the necessary assumptions for its identifiability; Section~\ref{estimation} describes the estimation procedure; Section~\ref{sim} reports the estimator's performance in simulations; Section~\ref{application} uses the estimator to identify the effect of extending Medicaid coverage to the low--income adult population in the U.S.; Section~\ref{discussion} discusses the results and offers direction for future research. 

\section{Estimator} \label{estimator} 
We are interested in using the results of an RCT to estimate complier--average causal effects for a target population. Compliance with treatment in the population is not assigned at random. It may depend on unobserved variables, thus confounding the effect of treatment received on the outcome of interest. RCTs are needed to isolate the effect of treatment received. 

Ideally, we would take the results of an RCT and reweight the sample such that the reweighted covariates match the those in the population. In practice, one rarely knows the true covariate distribution in the target population. Instead, we consider data from an observational study in which participants are representative of the target population. The proposed estimator combines RCT and observational data to overcome these issues.

\subsection{Assumptions} \label{assumptions}
Let $Y_{isd}$ be the potential outcome for individual $i$ in group $s$ and treatment received $d$. Let $S_i \in \{0,1\}$ denote the sample assignment, where $s=0$ is the population and $s=1$ is the RCT. Let $T_i \in \{0,1\}$ indicate treatment assignment and $D_i \in\{0,1\}$ indicate whether treatment was actually received. Treatment is assigned at random in the RCT, so we observe both $D_i$ and $T_i$ when $S_i = 1$. For compliers in the RCT, $D_i = T_i$.
	
The absence of $T_i$ in the subscript of the potential outcomes notation implicitly assumes the exclusion restriction for noncompliers, which precludes treatment assignment from having an effect on the potential outcomes of noncompliers in the RCT \cite{imbens2015causal}. Also implicit in our notation is the assumption of no interference between units, which prevents the potential outcomes for any unit from varying with treatments assigned to other units \cite{rubin1990}.

Let $W_i$ be individual $i$'s observable pretreatment covariates that are related to the sample selection mechanism for membership in the RCT, treatment assignment in the population, and complier status. Let $C_i$ be an indicator for individual $i$'s compliance with treatment, which is only observable for individuals in the RCT treatment group. In the population, we suppose that treatment is made available to individuals based on their covariates $W_i$.\footnote{We use the same $W_i$ across all identifying assumptions, which implicitly assumes that the observable covariates that determine sample selection also determine population treatment assignment and complier status.} Individuals with $T_i = 0$ do not receive treatment, while those with $T_i=1$ may decide whether or not to accept treatment. For individuals in the population, we only observe $D_i$ --- not $T_i$.

\vskip 0.2in
\begin{assumption}{1}{}\label{consistency}
	Consistency under parallel studies:
	\begin{equation*}
	Y_{i0d} = Y_{i1d} \qquad \forall \, i \, , \, d=\{0,1\}.
	\end{equation*}
\end{assumption} 

\noindent Assumption \eqref{consistency} requires that each individual $i$ has the same response to treatment received, whether $i$ is in the RCT or not. Compliance status $C_i$ is not a factor in this assumption because we assume that compliance is conditionally independent of sample and treatment assignment for all individuals with covariates $W_i$.
	
\vskip 0.2in
\begin{assumption}{2}{}\label{compl}
	Conditional independence of compliance and sample and treatment assignment:
	\begin{equation*}
	C_i \independent S_i,\,T_i \mid W_i, \qquad 0 < \pr(C_i = 1 \mid W_i) < 1. 
	\end{equation*}
\end{assumption}

\noindent Assumption \eqref{compl} implies that $P(C_i=1 | S_i=1, T_i=1, W_i) = P(C_i=1 | S_i=1, T_i=0, W_i)$, which is useful when predicting the probability of compliance as a function of covariates $W_i$. Together, Assumptions \eqref{consistency} and \eqref{compl} ensure that potential outcomes do not differ based on sample assignment or receipt of treatment.

Assumption \eqref{si_treat} ensures the potential outcomes for treatment are independent of sample assignment for individuals with the same covariates $W_i$ and treatment assignment.

\vskip 0.2in
\begin{assumption}{3}{}\label{si_treat}
	Strong ignorability of sample assignment for treated:
	\begin{equation*}
		(Y_{i01}, Y_{i11}) \independent S_i \mid (W_i, T_i=1,C_i=1), \qquad 0 < \pr(S_i=1 \mid W_i, T_i=1,C_i=1) <1.
	\end{equation*}
\end{assumption}
\noindent
We make a similar assumption for the potential outcomes under control.
\vskip 0.2in
\begin{assumption}{4}{}\label{si_ctrl}
	Strong ignorability of sample assignment for controls:
	\begin{equation*}
		(Y_{i00}, Y_{i10}) \independent S_i \mid (W_i, T_i=1,C_i= 1), \qquad 0 < \pr(S_i=1 \mid W_i, T_i=1,C_i=1) <1. 
\end{equation*}\end{assumption}
\noindent
Note that Assumptions \eqref{si_treat} and \eqref{si_ctrl} imply strong ignorability of sample assignment for treated and control noncompliers since Assumption \eqref{compl} states that compliance is also independent of sample and treatment assignment, conditional on $W_i$. However, we are interested only on modeling the response surfaces for compliers.

Restrictive exclusion criteria in RCTs can result in a sample covariate distribution that differs substantially from the population covariate distribution, thereby reducing the external validity of RCTs \cite{rothwell2005external}. High rates of exclusion pose a threat to Assumptions \eqref{si_treat} and \eqref{si_ctrl} if exclusion increases the likelihood that there are unobserved differences between the RCT and target population that are correlated with potential outcomes. For example, the RCT in our application required enrolled participants to recertify their eligibility status every six months during the study period. The exclusion of participants who failed to recertify because their household income exceeded a given cutoff threatens strong ignorability if the factors that contributed to the failure to recertify are correlated with unobservables that are also correlated with potential outcomes. While we cannot directly test the strong ignorability assumptions, bias arising from violations of these assumptions would cause the placebo tests described in Section~\ref{placebo-tests} to fail.

We include an additional assumption to identify the average causal effect for compliers, as in Angrist et al. \cite{Angrist1996}. 

\vskip 0.2in
\begin{assumption}{5}{}\label{monotonicity}
	One-sided noncompliance: 
	\begin{equation*}
	\pr(D_i \mid T_i = 0) = 0, \qquad \forall \, i.
	\end{equation*}
\end{assumption}
\noindent
Assumption \eqref{monotonicity} ensures that noncompliance is one-sided; i.e., individuals assigned to control are not allowed to receive treatment. It explicitly rules out the existence of individuals who always receive treatment (i.e, never-takers) and those who receive the opposite of their treatment assignment (i.e., defiers). 

\subsection{Causal diagram}\label{causal-diagram}

Figure~\ref{fig:DAG} shows the assumptions needed to identify PATT-C in the form of a causal diagram \cite{pearl1995causal}. The missing arrow between $T_i$ (or $S_i$) and $C_i$ signifies that treatment (or sample) assignment may only depend on compliance status through covariates $W_i$, by Assumption \eqref{compl}. Likewise, the missing arrow between $Y_{isd}$ and $T_i$ (or $S_i$) signifies that potential outcomes may only depend on treatment (or sample) assignment through covariates, by Assumptions \eqref{si_treat} and \eqref{si_ctrl}. 

Confounding arcs represent potential back-door paths that contain only unobserved variables. The existence of back-door paths from $D_i$ to $Y_{isd}$ through $S_i$ and $C_i$ imply that the average causal effect of $D_i$ on $Y_{isd}$ cannot be identified by just conditioning on $W_i$; that is, we cannot ignore the role of $S_i$ and $C_i$. From the internal validity standpoint, the role of $W_i$ is critical: in the presence of unobserved confounders, there is a back-door pathway from $T_i$ back to $W_i$ and into $Y_{isd}$.

\begin{figure}[htb]
	\begin{center}
		\includegraphics[width = 0.8\textwidth]{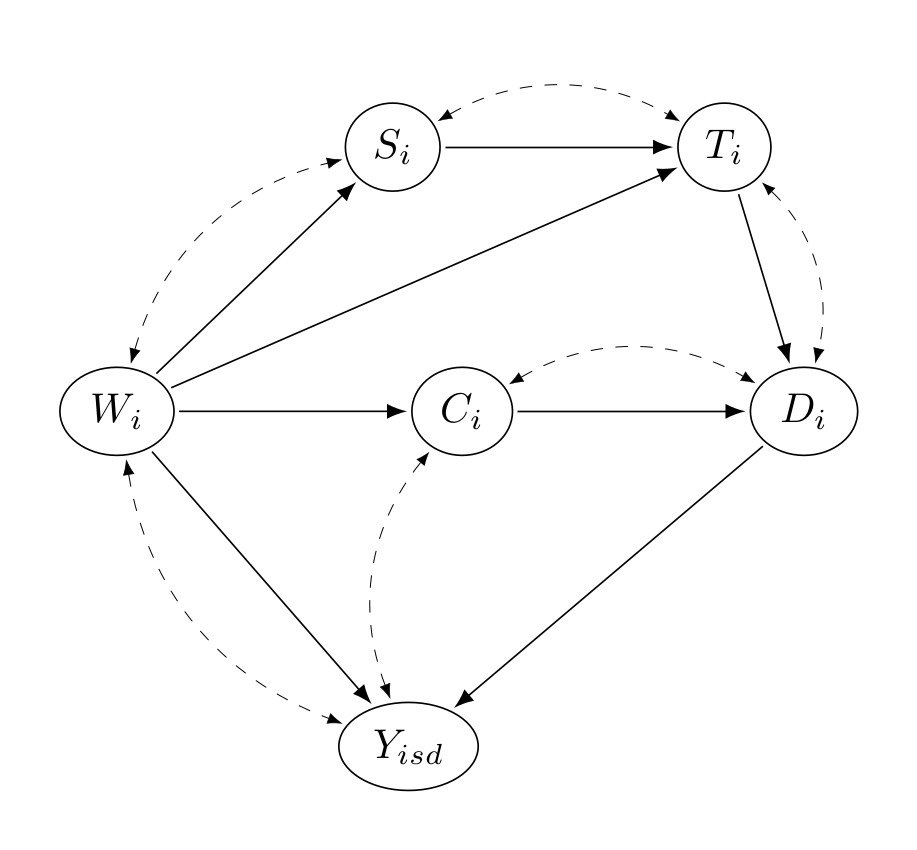}
		\end{center}
		\caption{Causal diagram representing the effect of treatment received, $D_i$, on potential outcomes, $Y_{isd}$. Solid unidirectional arrows represent causal links between observed variables. Dashed bidirectional arrows represent confounding arcs containing only latent variables, which are implicit in the confounding arcs.\label{fig:DAG}}
	\end{figure}

\subsection{PATT-C}\label{pattc-sub}

PATT-C is the average causal effect of taking up treatment estimated on individuals who received treatment in the population. This interpretation follows directly from Assumptions \eqref{consistency} and \eqref{compl}, which ensure that potential outcomes do not differ based on sample assignment or receipt of treatment. It is written as follows:
\begin{equation}\label{tpattc}
\tau_{\text{PATT-C}} = \ex\left( Y_{i01} - Y_{i00} \mid S_i=0, D_i=1\right).
\end{equation}
Theorem \eqref{thm1} relates the treatment effect in the RCT to the treatment effect in the population (proof given in Appendix~\ref{appendix-proof}).

\vskip 0.2in
\begin{theorem}\label{thm1}
Under Assumptions \eqref{consistency} -- \eqref{monotonicity},
\begin{equation}\label{tpattc-est}
\tau_{\text{PATT-C}} = \ex_{01}\left[  \ex\left(Y_{i11} \mid S_i=1, D_i=1, W_i\right)\right]
-\ex_{01}\left[  \ex\left(Y_{i10} \mid S_i=1, T_i =0, C_i =1, W_i\right) \right], 
\end{equation}
where $\ex_{01}\left[\ex(\cdot \mid\dots, W_i)\right]$ denotes the expectation with respect to the distribution of $W_i$ for population members who received treatment. 
\end{theorem}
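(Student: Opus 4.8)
The plan is to start from the definition \eqref{tpattc}, split it by linearity of expectation into a treated piece $\ex(Y_{i01}\mid S_i=0,D_i=1)$ and a control piece $\ex(Y_{i00}\mid S_i=0,D_i=1)$, and show separately that each equals the corresponding term on the right-hand side of \eqref{tpattc-est}. For each piece I would apply the law of iterated expectations, conditioning on $W_i$ inside the event $\{S_i=0,D_i=1\}$; the resulting outer average over the covariate distribution of population members who received treatment is precisely the operator $\ex_{01}[\,\cdot\,]$. The recurring device is that, under one-sided noncompliance (Assumption \eqref{monotonicity}), a population unit with $D_i=1$ must have been offered and accepted treatment, so the event $\{S_i=0,D_i=1\}$ coincides with $\{S_i=0,T_i=1,C_i=1\}$; the same logic identifies the RCT event $\{S_i=1,D_i=1\}$ with $\{S_i=1,T_i=1,C_i=1\}$.

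For the treated piece I would rewrite the inner mean as $\ex(Y_{i01}\mid S_i=0,T_i=1,C_i=1,W_i)$ and transport it from the population to the RCT using strong ignorability of sample assignment for the treated (Assumption \eqref{si_treat}), which replaces $S_i=0$ by $S_i=1$ without changing the conditional mean. Consistency across studies (Assumption \eqref{consistency}) then gives $Y_{i01}=Y_{i11}$, and re-expressing $\{S_i=1,T_i=1,C_i=1\}$ as $\{S_i=1,D_i=1\}$ (again by Assumption \eqref{monotonicity}) yields the first term of \eqref{tpattc-est}, namely $\ex_{01}[\ex(Y_{i11}\mid S_i=1,D_i=1,W_i)]$.

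The control piece proceeds analogously through Assumption \eqref{si_ctrl} and consistency to reach $\ex(Y_{i10}\mid S_i=1,T_i=1,C_i=1,W_i)$, but here lies the main obstacle: this conditional mean is not directly estimable, because for RCT compliers assigned to treatment we observe $Y_{i11}$, never the control outcome $Y_{i10}$. The crux is therefore to move from $T_i=1$ to $T_i=0$ among RCT compliers, replacing the unobservable quantity with $\ex(Y_{i10}\mid S_i=1,T_i=0,C_i=1,W_i)$, which is recoverable from the control arm. I would justify this swap using the random assignment of $T_i$ in the RCT together with the conditional independence of compliance in Assumption \eqref{compl}: since $C_i\independent T_i\mid W_i$ and $T_i$ is randomized independently of the potential outcomes (with the exclusion restriction built into the notation ensuring $Y_{i10}$ does not depend on $T_i$), the control-outcome distribution among compliers is identical across the two arms. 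Combining the two pieces gives \eqref{tpattc-est}; throughout, Assumption \eqref{sutva} is what guarantees that the potential outcomes entering these conditional means are well defined.
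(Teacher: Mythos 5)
Your proposal is correct and follows essentially the same route as the paper's proof: a linear split of \eqref{tpattc} into treated and control pieces, Assumption \eqref{monotonicity} to identify $\{D_i=1\}$ with $\{T_i=1,C_i=1\}$, iterated expectations over $W_i$ yielding the $\ex_{01}$ operator, sample transport via Assumptions \eqref{si_treat} and \eqref{si_ctrl}, and the swap from $T_i=1$ to $T_i=0$ among RCT compliers in the control term. The only differences are cosmetic --- you apply consistency (Assumption \eqref{consistency}) after rather than before the sample transport, which is harmless since Assumptions \eqref{si_treat} and \eqref{si_ctrl} cover both potential outcomes, and you make explicit the role of RCT randomization of $T_i$ in the final swap, a step the paper attributes to Assumption \eqref{compl} alone; if anything, your justification there is slightly more complete.
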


\section{Estimation procedure}\label{estimation}

There are two challenges in turning Theorem~\eqref{thm1} into an estimator of $\tau_{\text{PATT-C}}$ in practice. First, we must estimate the inner expectation over potential outcomes of compliers in the RCT. In the empirical example, we use an ensemble of algorithms to estimate the response surface for compliers in the RCT, given their covariates. Thus, the first term in the expression for $\tau_{\text{PATT-C}}$ is estimated by the weighted average of points on the response surface, evaluated for each treated population member's potential outcome under treatment. The second term is estimated by the weighted average of points on the response surface, evaluated for each treated population member's potential outcome under control.

The second challenge is that we cannot observe which individuals are included in the estimation of the second term. In the RCT control group, $C_i$ is unobservable, as they always receive no treatment ($D_i=0$). We must estimate the second term of Eq.~\eqref{tpattc-est} by predicting who in the control group would be a complier had they been assigned to treatment. Explicitly modeling compliance allows us to decompose PATT-C estimates by subgroup according to covariates common to both RCT and observational datasets. This approach also accounts for settings where the compliance rate differs between the sample and population, as well as across subgroups.

The procedure for estimating $\tau_{\text{PATT-C}}$ using Theorem~\eqref{thm1} is as follows:
\begin{enumerate}[label=\textbf{S.\arabic*},ref=S.\arabic*]
\item Using the group assigned to treatment in the RCT $(S_i=1, T_i=1)$, train a model (or an ensemble of models) to predict the probability of compliance as a function of covariates $W_i$. \label{compliance-model}
\item Using the model from \ref{compliance-model}, predict who in the RCT assigned to control \textit{would have} complied to treatment had they been assigned to the treatment group.\footnote{We dichotomize the predicted values based on an ``optimal'' cut-point that minimizes the Euclidean distance between the ROC curve and the (0,1) corner of the ROC plane. This method, which is commonly used in the epidemiology literature \cite{perkins2006inconsistency}, minimizes misclassification error by finding the cut-point closest to where the true positive rate (i.e., the proportion of sample compliers correctly identified) is 1 and the inverse of the true negative rate (i.e., the share of noncompliers correctly identified) is 0. We locate a prediction threshold of 45\% that balances the true positive and true negative rates at about 70\% each.}  \label{complier-predict}
\item For the observed compliers assigned to treatment and predicted compliers assigned to control, train a model to predict the response using $W_i$ and $D_i$, which gives $\ex(Y_{i1d} \mid S_i=1, D_i=d, W_i)$ for $d \in \{0,1\}$.\label{response-model}
\item For all individuals who received treatment in the population $(S_i=0, D_i=1)$, estimate their potential outcomes using the model from \ref{response-model}, which gives $Y_{i1d}$ for $d \in \{0,1\}$. The mean counterfactual $Y_{i11}$ minus the mean counterfactual $Y_{i10}$ is the estimate of $\tau_{\text{PATT-C}}$.\label{response-predict}
\end{enumerate}

Observe that Assumptions \eqref{si_treat} and \eqref{si_ctrl} are particularly important for estimating $\tau_{\text{PATT-C}}$: the success of the proposed estimator hinges on the assumption that the response surface is the same for compliers in the RCT and target population. If this does not hold, then the potential outcomes $Y_{i10}$ and $Y_{i11}$ for target population individuals cannot be estimated using the model from \ref{response-model}. Section~\ref{verifying} discusses whether the strong ignorability assumptions are plausible in the empirical application.

\subsection{Modeling assumptions}  \label{modeling-assumptions}

In addition to the identification assumptions, we require additional modeling assumptions for the estimation procedure. As pointed out in Section~\ref{assumptions}, we require that $W_i$ is complete because if any relevant elements of $W_i$ are not controlled, then there is a backdoor pathway from $T_i$ back to $W_i$ and into $Y_{isd}$. Additionally, we assume that the compliance model is accurate in predicting compliance in the training sample of RCT participants assigned to treatment and also generalizable to RCT participants assigned to control (\ref{compliance-model} and \ref{complier-predict}). We describe below the method of evaluating the generalizability of the compliance model.

\subsection{Ensemble method}  \label{ensemble}

In the empirical application, we use the super learner weighted ensemble method \cite{van2007super,polley2011super} for the estimation steps \ref{compliance-model} and \ref{response-model}. The super learner combines algorithms with a convex combination of weights based on minimizing cross-validated error, and typically outperforms single algorithms selected by cross-validation. 

We choose a variety of candidate algorithms to construct the ensemble, with a preference for algorithms that tend to perform well in supervised classification tasks. We also have a preference for algorithms that have a built-in variable selection property. The idea is that we input the same $W_i$ and each candidate algorithm selects the most important covariates for predicting compliance status or potential outcomes.\footnote{A potential concern when predicting potential outcomes is that the algorithm might shrink the treatment received predictor to zero, which would result in no difference between counterfactual potential outcomes.} We select three types of candidate algorithms: additive regression models \cite{buja1989linear}; gradient boosted regression \cite{friedman2001greedy}; L1 or L2-regularized linear models (i.e., Lasso or ridge regression, respectively) \cite{tibshirani2012strong}; and ensembles of decision trees (i.e., random forests) \cite{breiman2001}. Lasso is particularly attractive because it tends to shrink all but one of the coefficients of correlated covariates to zero. 

\section{Simulations} \label{sim}

We conduct a simulation study comparing the performance of the PATT-C estimator against its unadjusted analogue, which we refer to as the Population Average Treatment Effect on the Treated (PATT): 
\begin{equation}\label{tpatt-est}
\tau_{\text{PATT}} = \ex_{01}\left[  \ex\left(Y_{i11} \mid S_i=1, T_i =1, W_i\right)\right]
-\ex_{01}\left[  \ex\left(Y_{i10} \mid S_i=1, T_i =1, W_i\right) \right].
\end{equation}
\noindent 
Eq. \eqref{tpatt-est} identifies the population-average causal effect of treatment assignment, adjusted according to the covariate distribution of population members who received treatment. PATT is estimated following the same estimation procedure as PATT-C, except that in estimation step~\ref{response-model} the response curve is estimated on all RCT participants, regardless of compliance status, conditional on their covariates and actual treatment received. Identical to~\ref{response-predict} in the estimation procedure for PATT-C, we then use the response model to estimate the outcomes of population members who received treatment given their covariates. Like the PATT-C estimator, the PATT estimator crucially relies on the assumption that the response surface is the same for RCT participants and population members who received treatment.

We compare the population estimators against the sample Complier Average Causal Effect (CACE) \cite{imbens1997bayesian}, which is commonly referred to the Local Average Treatment Effect (LATE) in the econometrics literature \cite{imbens1994identification,Angrist1996}. In the context of program evaluation, it is more relevant than the ITT estimator because only RCT participants who received treatment would have their outcomes affected by treatment in the presence of a nonnegative treatment effect.

CACE is defined as the average causal effect of treatment received restricted to sample compliers:
\begin{equation}\label{tcace}
\tau_{\text{CACE}} = \ex\left( Y_{i11} - Y_{i10} \mid S_i=1, C_i=1\right).
\end{equation}	
In other words, CACE is the treatment effect for RCT participants who would comply regardless of treatment assignment. However, we are unable to observe the compliance status of RCT participants assigned to control because we do not know if they would have complied if they had been assigned to treatment. A generalization of the instrumental variables estimator of the CACE in the presence of noncompliers is given by:
\begin{equation}\label{tcace-hat}
\hat{\tau}_{\text{CACE}} = \frac{\ex\left( Y_{i11} - Y_{i10} \mid S_i=1\right)}{\pr(T_i = D_i = 1 \mid S_i=1)},
\end{equation}	
which is identified under Assumption \eqref{monotonicity}. The estimator is equivalent to scaling the ITT effect by the sample proportion of treated compliers \cite{freedman2006}.

\subsection{Simulation design}

The simulation is designed so that the effect of treatment is heterogeneous and depends on covariates which are different in the RCT and target population. The design satisfies the conditional independence assumptions in Figure~\ref{fig:DAG}.

In the simulation, RCT eligibility, complier status, and treatment assignment in the population depend on multivariate normal covariates $(W^{1}_i, W^{2}_i, W^{3}_i, W^{4}_i)$ with means $(0.5, 1, -1, -1)$ and covariances $\cov(W^{1}_i, W^{2}_i) = \cov(W^{1}_i, W^{4}_i)= \cov(W^{2}_i, W^{4}_i) = \cov(W^{3}_i, W^{4}_i) = 1$ and $\cov(W^{1}_i, W^{3}_i) = \cov(W^{2}_i, W^{3}_i) = 0.5$.  The first three covariates are observed by the researcher and $W^{4}_i$ is unobserved. $U_i, V_i, R_i$, and $Q_i$ are standard normal error terms. $U_i, V_i, R_i, Q_i$, and $(W^{1}_i, W^{2}_i, W^{3}_i, W^{4}_i)$ are mutually independent.

The equation for selection into the RCT is
\begin{equation*}
S_i = \ind(e_2 + g_1W^{1}_i + g_2W^{2}_i + g_3W^{3}_i + e_4W^{4}_i + R_i > 0).
\end{equation*}	
The parameter $e_2$ varies the fraction of the population eligible for the RCT and $e_4$ varies the degree of confounding with sample selection. We set the constants $g_1, g_2,$ and $g_3$ to be $0.5,$ $0.25,$ and $0.75$, respectively. 

Complier status is determined by
\begin{equation*}
C_i = \ind(e_3 + h_2W^{2}_i + h_3W^{3}_i + e_5W^{4}_i + Q_i > 0),
\end{equation*}	
where $e_3$ varies the fraction of compliers in the population, and $e_5$ varies the degree of confounding with treatment assignment. We set the constants $h_2$ and $h_3$ to $0.5$. 

For individuals in the population ($S_i=0$), treatment is assigned by
\begin{equation*}
T_i = \ind(e_1 + f_1W^{1}_i + f_2W^{2}_i + e_6W^{4}_i + V_i > 0),
\end{equation*}	
where $e_1$ varies the fraction eligible for treatment in the population and $e_6$ varies the degree of confounding with sample selection. We set the constants $f_1$ and $f_2$ to $0.25$ and $0.75$, respectively. For individuals in the RCT ($S_1=1$), treatment assignment $T_i$ is a sample from a Bernoulli distribution with probability $p=0.5$.

Finally, the response is determined by 
\begin{align*}
Y_{isd} &= a + bD_i + c_1W^{1}_i + c_2W^{2}_i + c_3W^{4} +dU_i,
\end{align*}
where we set $a, c_1, c_3,$ and $d$ to $1$ and $c_2$ to $2$. The treatment effect $b$ is heterogeneous:
\begin{align*}
b &= \left\{
			\begin{array}{rr}
        1, & \text{if } W^{1}_i > 0.75\\
		-1, & \text{if } W^{1}_i \leq 0.75\\
			\end{array}\right.
\end{align*}	 
 
We generate a population of 30,000 individuals and randomly sample 5,000. Those among the 5,000 who are eligible for the RCT ($S_i=1$) are selected. Similarly, we sample 5,000 individuals from the population and select those who are not eligible for the RCT ($S_i=0$) to be our observational study participants.\footnote{This set-up mimics the reality that a population census is usually impossible.} We set each individual's treatment received $D_i$ according to their treatment assignment and complier status and observe their responses $Y_{isd}$. In this design, the manner in which $S_i$, $T_i$, $D_i$, $C_i$, and $Y_{isd}$ are simulated ensures that Assumptions \eqref{consistency} -- \eqref{monotonicity} hold.

In the assigned-treatment RCT group $(S_i = 1, T_i = 1)$, we train a gradient boosted regression on the covariates to predict who in the control group $(S_i = 1, T_i = 0)$ would comply with treatment ($C_i=1$), which is unobservable. These individuals \textit{would have} complied had they been assigned to the treatment group. For this group of observed compliers to treatment and predicted compliers from the control group of the RCT, we estimate the response surface by training another gradient boosted regression on features $(W^{1}_i, W^{2}_i, W^{3}_i)$ and $D_i$.

\subsection{Simulation results}\label{sim-results}

We vary each of the parameters $e_1$, $e_2$, $e_3$, $e_4$, $e_5$, and $e_6$ along a grid of five random standard normal values in order to generate different combinations of rates of compliance, treatment eligibility, RCT eligibility in the population, and confounding. For each possible combination of the six parameters, and holding all other parameters constant, we compute over 10 simulation runs the average root mean squared error (RMSE) between the true population average treatment effect and the PATT-C, PATT, or CACE estimates. Averaging across combinations, PATT yields the highest average RMSE (1.06), followed by the CACE (0.89), and the PATT-C (0.76). Since PATT does not adjust for compliance, we would expect bias relative to the PATT-C in settings with noncompliance.

Figures~\ref{fig:rmse_ratec_rates} and \ref{fig:rmse_ratec_ratet} show the average RMSE of the estimators as a function of the population compliance rate and the share of population members eligible to participate in the RCT or the population treatment rate, respectively. The PATT estimator does not correct for bias resulting from noncompliance in the population and consequently performs poorly when the population compliance rate is relatively low (i.e., $\leq 60\%$). The PATT-C estimator corrects for noncompliance in the population and thus outperforms the PATT in low-compliance settings. The CACE corrects for noncompliance in the sample, and underperforms compared to the PATT-C due to differences between the sample and population.

\begin{figure}[htbp]
	\begin{center}
		\includegraphics[width = \textwidth]{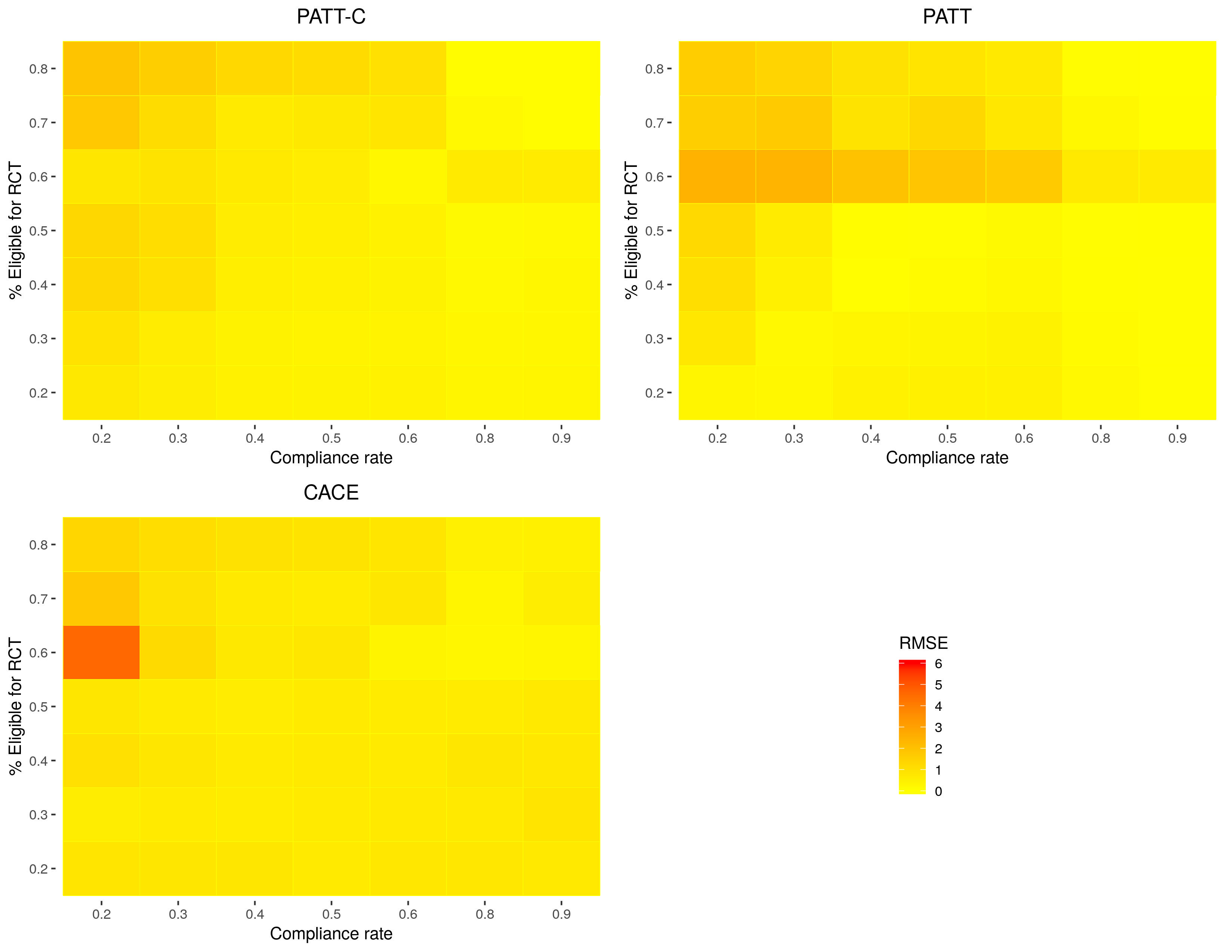}
		\caption{Average RMSE, binned by compliance rate and percent eligible for the RCT. \emph{Notes:} Darker tiles correspond to higher errors and white tiles correspond to missing simulated data.\label{fig:rmse_ratec_rates}}
	\end{center}
\end{figure}

\begin{figure}[htbp]
	\begin{center}
		\includegraphics[width = \textwidth]{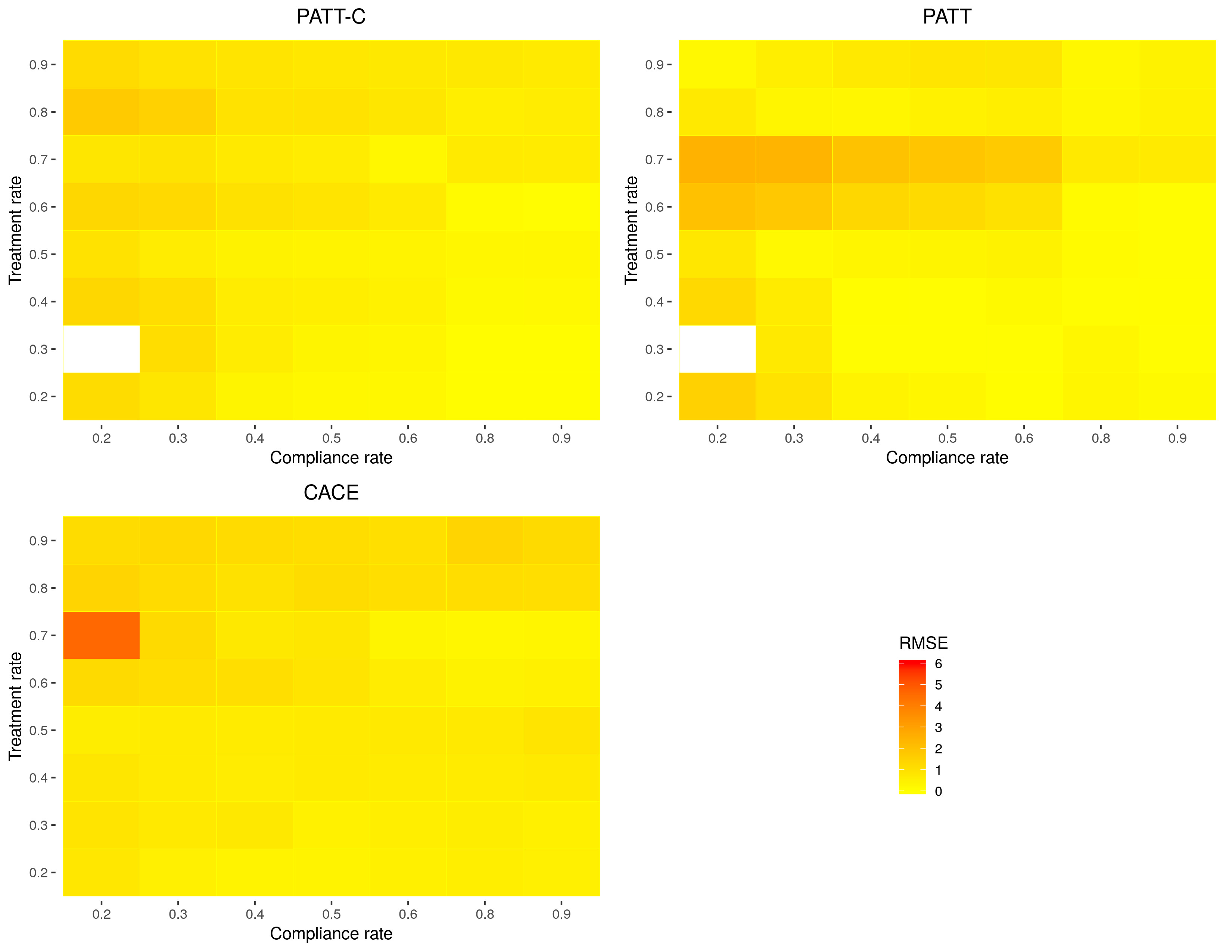}
		\caption{Average RMSE binned by compliance rate and treatment rate.\label{fig:rmse_ratec_ratet}}
	\end{center}
\end{figure}

Figure~\ref{fig:rmse_boxplots_rateC} compares the average RMSE of the estimators at varying levels of compliance in the population. The error for each of the estimators decreases as a greater share of population members comply with treatment. PATT-C outperforms both PATT and CACE in terms of minimizing RMSE when the population compliance rate is below 90\%. The PATT outperforms the PATT-C only at a 90\% compliance rate, and the CACE outperforms the PATT when the population compliance rate is at 60\% or below. 

\begin{figure}[htbp]
	\begin{center}
		\includegraphics[width = \textwidth]{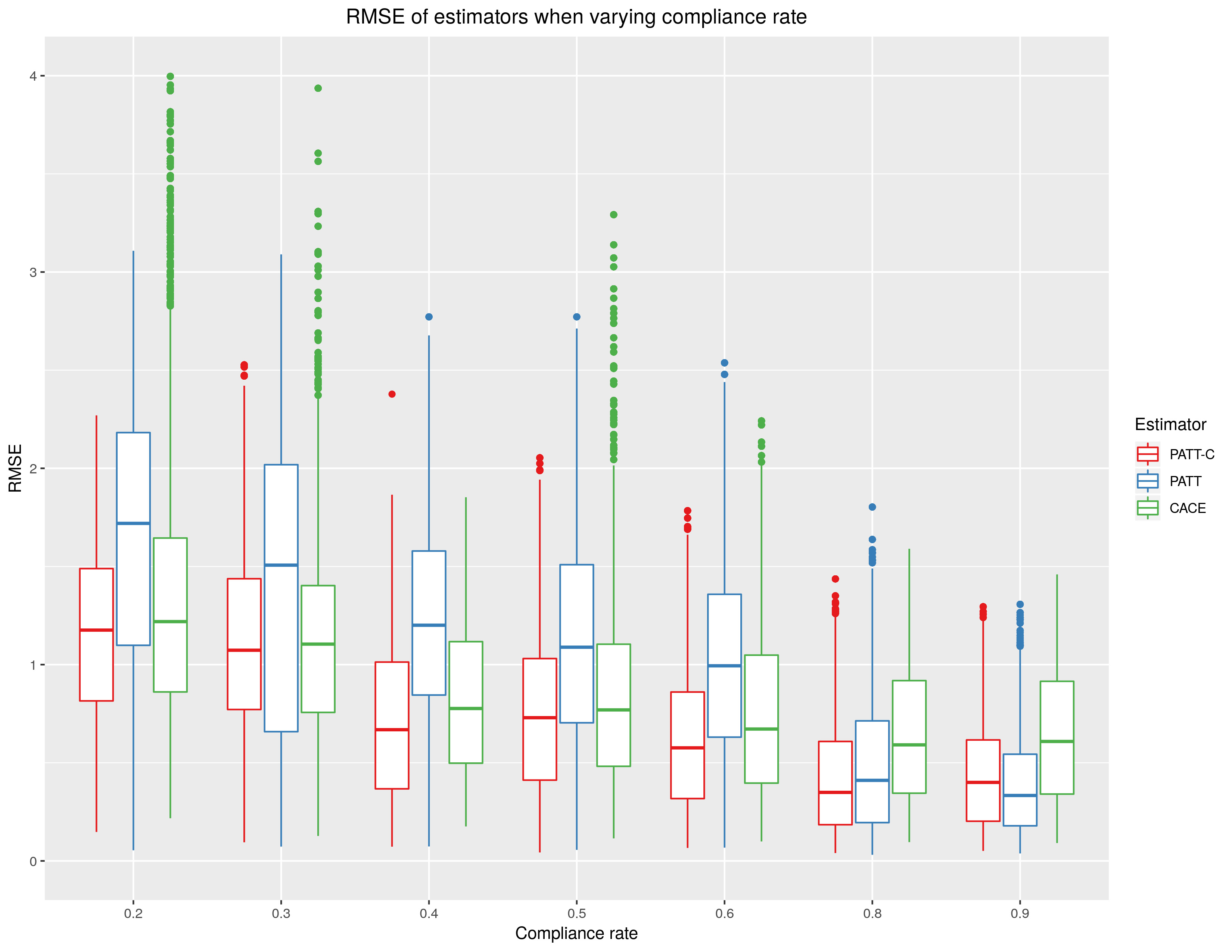}
		\caption{Average RMSE according to compliance rates in the population.\label{fig:rmse_boxplots_rateC}}
	\end{center}
\end{figure}

In the Appendix, Figures~\ref{fig:rmse_boxplots_RateConC}, \ref{fig:rmse_boxplots_RateConT}, and \ref{fig:rmse_boxplots_RateConS} plot the relationships between estimation error and the degrees of confounding in the mechanisms that determine compliance, treatment assignment, and sample selection, respectively. The estimation error of PATT-C is comparatively less invariant to increases in the degree of confounding in the three mechanisms compared to its unadjusted counterpart. The estimation error of CACE is generally more variable than that of the population estimators due to CACE's inability to account for differences between the sample and target population.

\pagebreak
\section{Application: Medicaid and health care use} \label{application}

We apply the PATT-C estimator to measure the effect of Medicaid coverage on health care use for a target population of adults who may benefit from expansions to the Medicaid program. In particular, we examine the population of nonelderly adults in the U.S. with household incomes at or below 138\% of the Federal Poverty Level (FPL) --- which amounts to \$32,913 for a four--person household in 2014 --- who may be eligible for Medicaid following the Affordable Care Act (ACA) Medicaid expansion.

We draw RCT data from the Oregon Health Insurance Experiment (OHIE) \cite{finkelstein2012,baicker2013,baicker2014,Taubman}, which randomly assigned Medicaid coverage to the uninsured and examined their subsequent health care use. Subsequent research calls in to question the external validity of the OHIE, which resulted in the counterintuitive finding that Medicaid increased ER use among RCT participants. Quasi-experimental studies on the impact of the 2006 Massachusetts health reform --- which served as a model for the ACA --- show that ER use decreased or remained constant following the reform \cite{miller2012effect, kolstad2012impact}. Kowalski \cite{NBERw22363} re-weights the LATE estimated on the OHIE sample to a broader population in Massachusetts using observational data from the Behavioral Risk Factor Surveillance System (BRFSS) \cite{BRFSS} and finds Medicaid significantly decreased the probability of visiting the ER, and had no significant effect on the number of ER visits. 
 
\subsection{RCT sample} 

In 2008, a group of uninsured low-income adults participated in the OHIE for the chance to apply to receive health insurance through a state Medicaid program. In line with program eligibility requirements, participants were restricted to Oregon residents aged 19 to 64 who were not otherwise eligible for public insurance, who had been without insurance for six months, had income below the FPL, and held assets below \$2,000. Treatment assignment occurred at the household level: participants selected by the lottery won the opportunity for themselves and any household member to apply for Medicaid. Within a sample of 74,922 individuals representing 66,385 households, 29,834 participants were selected by the lottery; the remaining 45,008 participants served as controls in the experiment. 

Participants in selected households were enrolled in Medicaid if they returned an enrollment application within 45 days of receipt. Only 30\% of participants in selected households successfully enrolled. The low compliance rate is primarily due to failure to return an application or demonstrate income below the FPL. Compliance is measured using a binary variable indicating whether the participant was enrolled in any Medicaid program during the study period.

We include as covariates in our response and complier models (\ref{compliance-model} and \ref{response-model}, respectively) pretreatment information on participant age, race, gender, education, marital status, number of children in the household, employment status, health status, and household income. We also include indicator variables on household size because lottery selection was random conditional on the number of household members. All analyses in the current application cluster-adjust standard errors at the household level because treatment occurs at the household level. The analyses also use survey weights to adjust for the probability of being sampled and non-response.

The response data originate from a mail survey containing questions about health insurance and health care use. The response variables measure health care use in terms of the number of ER and primary care (i.e., outpatient) visits in the past six months. Following Finkelstein et al. \cite{finkelstein2012}, indicator variables for survey wave and interactions with household size indicators are also included as predictors in the response and complier models because the proportion of treated participants varies across the survey waves.

\subsection{Observational data} 

We acquire data on the target population from the National Health Interview Study (NHIS) \cite{NHIS} for the period 2008 to 2017.\footnote{A possible limitation of this application is that it ignores the complex sampling techniques of the NHIS sample design such as differential sampling, which is discussed in detail in Parsons et al. \cite{parsons2014design}.} We restrict the sample to respondents with income below 138\% of the FPL and who are uninsured or on Medicaid and select covariates on respondent characteristics that match the OHIE pretreatment covariates. We use a recoded variable that indicates whether respondents are on Medicaid as an analogue to the OHIE compliance measure. The outcomes of interest from the NHIS are based on questions that are virtually identical to the OHIE mail survey questions, except that the utilization questions in the NHIS are asked with a 12 month rather than a 6 month look-back period. Following Finkelstein et al. \cite{finkelstein2012}, we resolve this discrepancy by halving the NHIS responses in order to make them comparable to the OHIE outcomes.

\subsection{Verifying assumptions} \label{verifying}

We verify the identification assumptions needed to identify $\tau_{\text{PATT-C}}$ prior to conducting placebo tests in Section~\ref{placebo-tests} and reporting the results in Section~\ref{results}.

\subsubsection{Consistency} Assumption \eqref{consistency} ensures that potential outcomes for participants in the target population would be identical to their outcomes in the RCT if they had been randomly assigned their observed treatment. In the empirical application, Medicaid coverage for uninsured individuals was applied in the same manner in the RCT as it is in the population. Differences in potential outcomes due to sample selection might arise, however, if there are differences in the mail surveys used to elicit health care use responses between the RCT and the nonrandomized study. 

\subsubsection{Conditional independence} Assumption \eqref{compl} is violated if assignment to treatment influences the compliance status of individuals with the same covariates. The compliance ensemble can accurately classify compliance status for 78\% of treated RCT participants with only the number of household members, survey wave (and the interaction between these indicators and household size indicators), and pretreatment covariates --- and not treatment assignment --- as predictors.\footnote{The compliance ensemble is evaluated in terms of 10--fold cross--validated MSE. The distribution of MSE for the ensemble and its candidate algorithms are provided in Table ~\ref{compliance-ensemble}.}  This gives evidence in favor of the conditional independence assumption.

\subsubsection{Strong ignorability} We cannot directly test Assumptions \eqref{si_treat} and \eqref{si_ctrl}, which state that potential outcomes for treatment and control are independent of sample assignment for individuals with the same covariates and assignment to treatment. The assumptions are only met if every possible confounder associated with the response and the sample assignment is accounted for. In estimating the response surface, we use all demographic, socioeconomic, and pre-existing health condition data that were common in the OHIE and NHIS data. Potentially important unobserved confounders include the number of hospital and outpatient visits in the previous year, proximity to health services, and enrollment in other federal programs. 

The final two columns of Table~\ref{rct-nrt-compare} compares RCT participants selected for Medicaid with population members on Medicaid. Compared to the RCT compliers, the population members who received treatment are younger, female, and more racially and ethnically diverse. Diagnoses of diabetes, asthma, high blood pressure, and heart disease are more common among the population on Medicaid then the RCT treated.

Strong ignorability assumptions may also be violated due to the fact that the OHIE applied a more stringent exclusion criteria compared to the NHIS sample. While the RCT and population sample both screened for individuals below the FPL, only the RCT required those enrolled to recertify their household income eligibility during the study period. Strong ignorability would not hold if the failure to recertify is correlated with unobserved variables.

\subsubsection{No defiers} \label{sens-defiers}

Angrist et al. \cite{Angrist1996} show that the bias due to violations of Assumption \eqref{monotonicity} is equivalent to the difference of average causal effects of treatment received for compliers and defiers, multiplied by the relative proportion of defiers, 
$\pr(i\text{ is a defier}) / (\pr(i\text{ is a complier]}) - \pr(i\text{ is a defier})).$

Table~\ref{ohie-status} reports the distribution of participants in the OHIE by status of treatment assignment and treatment received. Assumption \eqref{monotonicity} does not hold due to the presence of defiers; i.e., participants who were assigned to control and enrolled in Medicaid during the study period. About 7\% of the RCT sample were assigned to control but were enrolled in Medicaid ($T_i < D_i$) and 66\% of the sample complied with treatment assignment ($D_i = T_i$), which results in a bias multiplier of 0.1. Suppose that the difference of average causal effects of Medicaid received on ER use for compliers and defiers is 1.2\%. The resulting bias is only 0.1\%, which would not meaningfully alter the interpretation of the PATT-C or CACE estimates. 

\subsubsection{Implied assumptions} We implicitly assume no interference between households in the OHIE because treatment assignment occured at the household level. Within-household interference is not possible in this RCT because household members share the same treatment status. Interference between households would threaten the no-interference assumption in the unlikely case that the Medicaid coverage of individuals in treated households affects the health care use of individuals in households assigned to control.

The implied exclusion restriction assumption ensures treatment assignment affects the response only through enrollment in Medicaid. It is reasonable that a person's enrollment in Medicaid, not just their eligibility to enroll, would affect their hospital use. For private health insurance one might argue that eligibility may be be negatively correlated with hospital use, as people with pre-existing conditions are less often eligible yet go to the hospital more frequently. This should not be the case with a federally funded program such as Medicaid. 

\subsection{Placebo tests} \label{placebo-tests}

We conduct placebo tests to check whether the average outcomes differ between the RCT compliers on Medicaid and the adjusted population members who received Medicaid. If the placebo tests detect a significant difference between the mean outcomes of these groups, it would indicate bias arising from violations of the identification and modeling assumptions.

Table~\ref{placebo} reports the results of placebo tests for the weighted difference-in-mean outcomes of RCT compliers and adjusted population members who received treatment. The mean outcome of RCT compliers is calculated from the observed RCT sample and is weighted by OHIE survey weights. The adjusted population mean is the counterfactual $Y_{i11}$ estimated from \ref{response-predict} of the estimation procedure, and weighted by NHIS survey weights.

We first perform a Test of One-Sided Significance (TOST) \cite{berger1996bioequivalence} that evaluates equivalence between the weighted distributions, with a failure to reject the null of a substantively large difference. The TOST $p$-value is below the conventional level of significance ($p \leq 0.05$), indicating rejection of the null of a substantively large difference. Secondly, we conduct standard tests-of-difference and fail to reject the null of no difference. These results imply that the PATT-C estimator is not biased by differences in how Medicaid is delivered or health outcomes are measured between the RCT and population, or by differences in the unobserved characteristics of individuals in the sample or population. 
 
\subsection{Empirical results}\label{results}

Table~\ref{compliance-compare} compares population and sample estimates of the effect of Medicaid on health care use. The PATT-C estimates indicate that Medicaid coverage has a statistically significant and negative effect on the number of ER and outpatient visits. The PATT estimates, which do not account for noncompliance in the RCT, indicate a similarly sized negative effect on the number of ER visits and no effect on the number of outpatient visits attributable to Medicaid coverage. Our population estimates are consistent with the study of Kowalski \cite{NBERw22363}, which extrapolates LATE estimates to the Massachusetts population and find negative LATEs on ER utilization. 
  
The PATT-C estimates differ in direction and magnitude relative the CACE estimates on the OHIE sample, which can be explained by differences in the covariate distributions of the RCT sample and population. The CACE estimates indicate a positive and significant effect on primary care visits and no effect on ER use. The direction and magnitude of the CACE estimated treatment effect on ER use is almost identical to the corresponding LATE estimates on the RCT sample reported by Finkelstein et al. \cite{finkelstein2012}, although the authors uncover a significant (positive) effect of Medicaid on ER use.

\begin{table}
	\centering
\begin{threeparttable}[htbp]
	\caption{Comparison of population and sample estimates.\label{compliance-compare}} 
	\begin{tabular}{@{}lcc@{}}
		\toprule
		\backslashbox{Estimator}{Outcome} 				& $\#$ ER visits   & $\#$ outpatient visits      \\ 
		\midrule
		PATT-C                                          & -0.002 	& -0.017  \\
			                                            & (0.0006)	& (0.0008)  \\
		PATT                                            & -0.001 	& 0.003  \\
				                                        & (0.0004)	& (0.0027)  \\
		CACE                                            & 0.023  	& 0.225   \\ 
					                                    & (0.0327)  & (0.0920)   \\ \bottomrule
	\end{tabular}
  \begin{tablenotes}[flushleft]
	\small
	\item \emph{Notes:} survey-weighted difference-in-means; bootstrapped standard errors (in parentheses) are clustered on the household. Response models include binary indicators for treatment received, household size, survey wave (and interactions), and pretreatment covariates. 
\end{tablenotes}
\end{threeparttable}
\end{table}

Treatment effect heterogeneity in the population helps to explain the differences between the complier-adjusted sample and population treatment effect estimates. Figure~\ref{fig:het-plot} plots PATT-C estimates of heterogeneous treatment effects on ER and outpatient use in the population. Heterogeneous treatment effects are estimated by taking differences across response surfaces for a given binary covariate, and response surfaces are estimated with the ensemble mean predictions. We find significant decreases in ER use due to Medicaid coverage for adult population members who are female, black, below the age of 50, have a vocational or two-year degree, and have household income of over $\$10,000$. Population members with diabetes, asthma, and heart conditions are also less likely to visit the ER due to Medicaid coverage. The estimated effect of Medicaid on the number of outpatient visits is negative for population members who are over the age of 50, black, diagnosed with a heart condition, have a vocational or two-year degree, and have household income of over $\$10,000$.

\begin{figure}[htbp]
	\begin{center}
		\includegraphics[width = 0.49\textwidth]{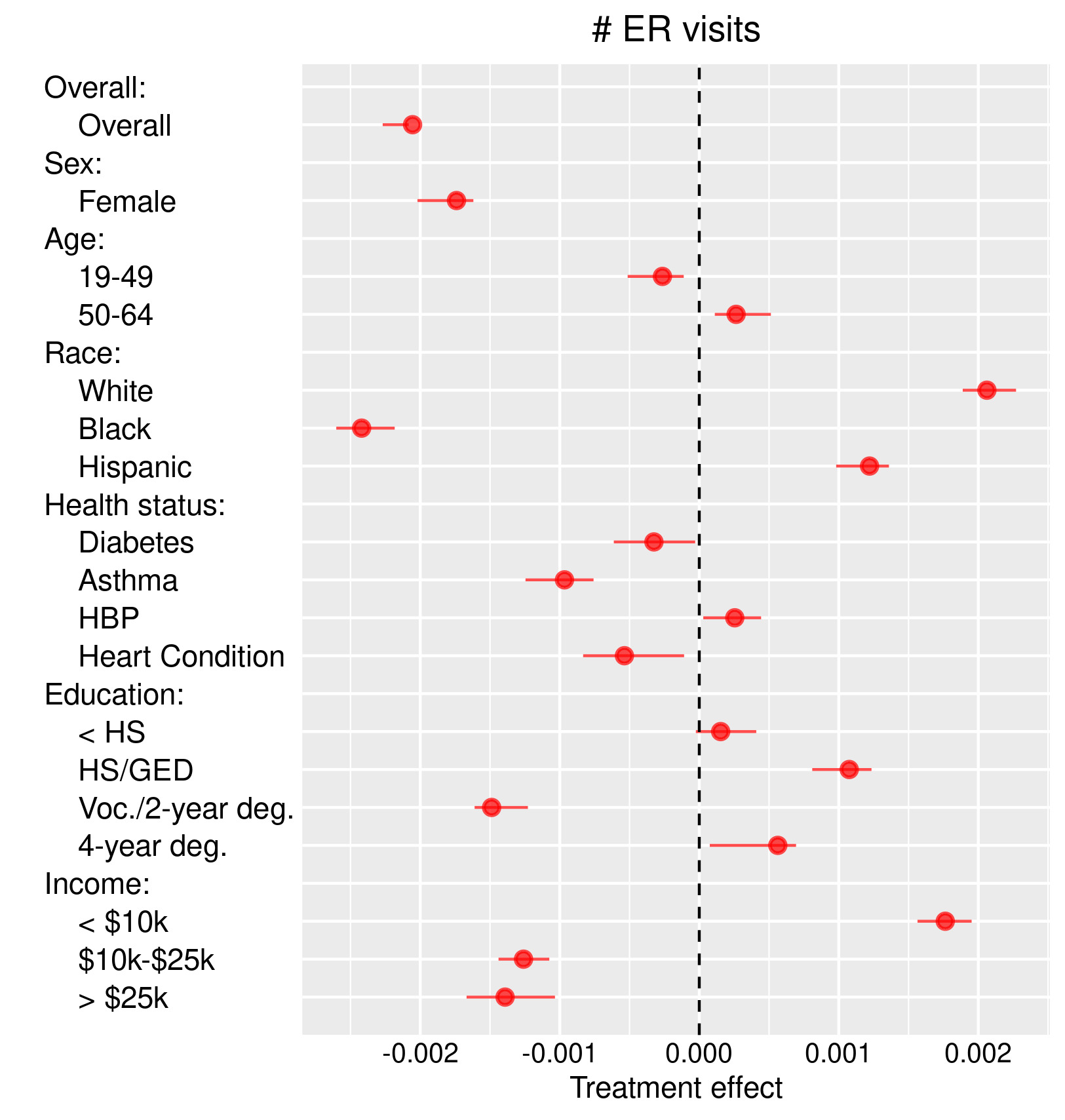} 
		\includegraphics[width = 0.49\textwidth]{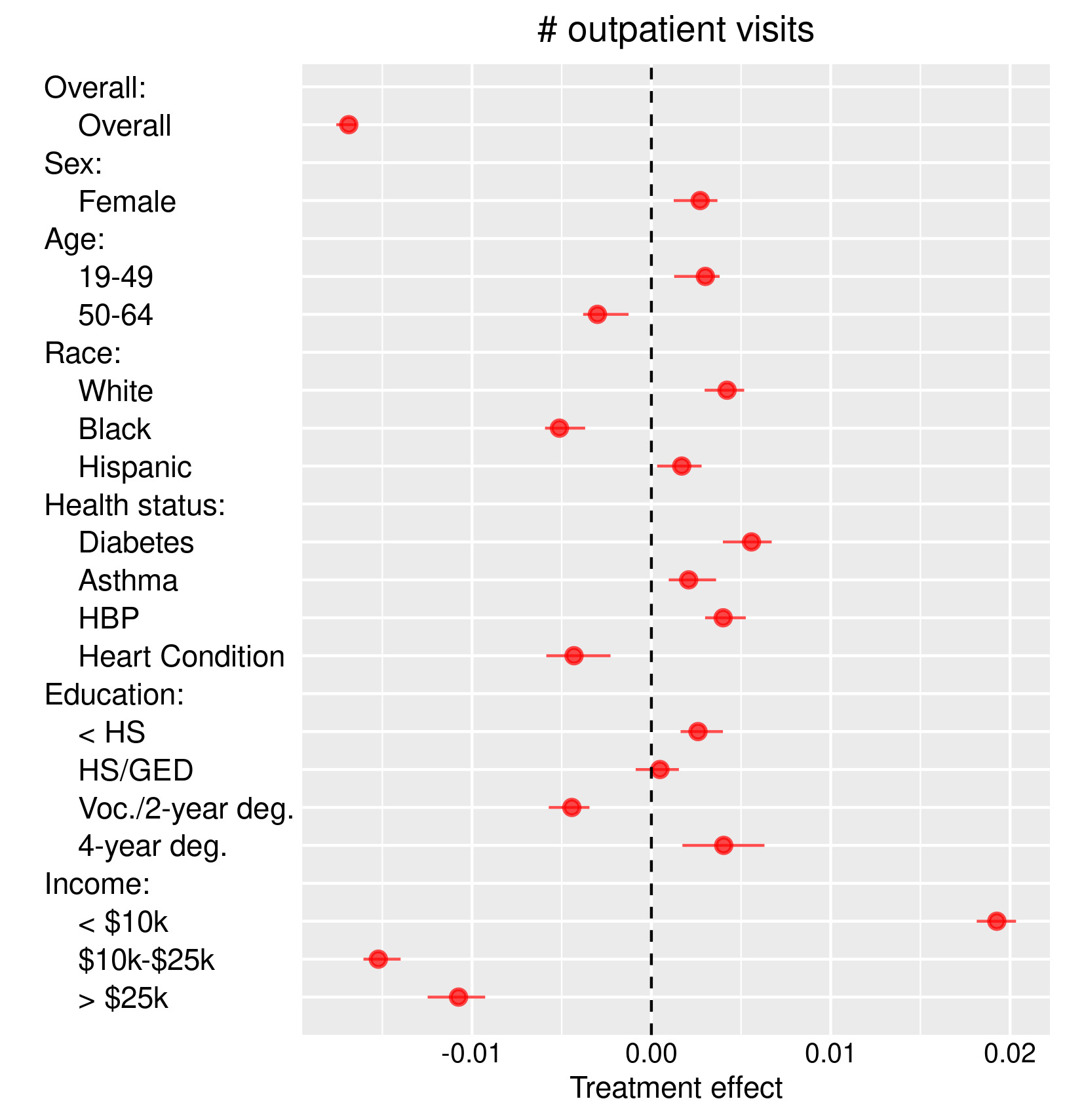}
		\caption{Heterogeneity in PATT-C treatment effect estimates. \emph{Notes:} Horizontal lines represent 95\% percentile bootstrap confidence intervals.\label{fig:het-plot}}
	\end{center}
\end{figure}

\section{Discussion} \label{discussion}

The simulation results presented in Section~\ref{sim} show that the PATT-C estimator outperforms its unadjusted counterpart when the population compliance rate is low. Of course, the simulation results depend on the particular way we parameterized the compliance, selection, treatment assignment, and response schedules. 

In particular, the strength of correlation between the covariates and compliance governs how well the estimator will perform, since \ref{compliance-model} of the estimation procedure predicts who \textit{would be} a complier in the RCT control group, had they been assigned to treatment. If it is difficult to predict compliance using the observed covariates, then the estimator will perform badly because of noise introduced by incorrectly treating noncompliers as compliers. Further research should be done into ways to test how well the model of compliance works in the population or explore models to more accurately predict compliance in RCTs.  Accurately predicting compliance is not only essential for yielding unbiased estimates of the average causal effects for target populations, it is also useful for researchers and policymakers to know which groups of individuals are unlikely to comply with treatment. 

In the OHIE trial, less than one-third of those selected to receive Medicaid benefits actually enrolled. We accurately classified compliance status for 78\% of treated RCT participants using only the pretreatment covariates as features. While we do not know how well the compliance ensemble predicts for the control group, the control group should be similar to the treatment group on pretreatment covariates because of the RCT randomization. The model's performance on the training set suggests that compliance is not purely random and depends on observed covariates, which gives evidence in favor of using PATT-C. 

In the empirical application, the sample population differs in several dimensions from the target population of individuals who would be covered by other Medicaid expansions, such as the ACA expansion. For instance, the RCT participants are disproportionately white and over the age of 50. The RCT participants volunteered for the study and therefore may be in poorer health compared to the target population. These differences in baseline covariates make reweighting or response surface methods necessary to extend the RCT results to the population.

Explicitly modeling compliance allows us to decompose population estimates by subgroup according to pretreatment covariates common to both RCT and observational datasets; e.g, demographic variables, pre-existing conditions, and insurance coverage. We find substantial treatment effect heterogeneity in terms of race, education, and health status subgroups. This pattern is expected because RCT participants volunteered for the
study. 

The PATT-C estimates indicate that Medicaid coverage significantly decreases the number of ER and outpatient visits. The estimated effect on ER utilization is consistent with quasi-experimental studies on the impact of the 2006 Massachusetts health reform and a previous study that extrapolates LATE estimates on the OHIE sample to the Massachusetts population. The PATT-C estimates differ in direction and magnitude relative the complier-adjusted sample estimates, which can be explained by treatment effect heterogeneity in the population and the differences in covariate distributions of the RCT sample and population.

\section*{Acknowledgments}
Poulos acknowledges support from the National Science Foundation Graduate Research Fellowship [grant number DGE 1106400]. Code for reproducing the results in this paper is available in the public repository \url{https://github.com/jvpoulos/patt-c}.

\pagebreak


\pagebreak
\begin{appendices}
	
\newcommand{\hbAppendixPrefix}{A}
\renewcommand{\thefigure}{\hbAppendixPrefix\arabic{figure}}
\setcounter{figure}{0}
\renewcommand{\thetable}{\hbAppendixPrefix\arabic{table}} 
\setcounter{table}{0}
\renewcommand{\theequation}{\hbAppendixPrefix\arabic{equation}} 
\setcounter{equation}{0}

\section{Proof of Theorem \eqref{thm1}}\label{appendix-proof}

\begin{proof}
	We separate the expectation linearly into two terms and consider each individually.
	\begin{align*}
	\ex\left(Y_{i01} \mid S_i=0,D_i=1\right) &= \ex\left(Y_{i11} \mid S_i=0, D_i=1\right) \tag*{by Assumption \eqref{consistency}} \\
	&= \ex\left(Y_{i11} \mid S_i=0, T_i=1, C_i=1\right) \tag*{by Assumption \eqref{monotonicity}} \\
	&= \ex_{01}\left[  \ex\left(Y_{i11} \mid S_i=0, T_i=1, C_i=1, W_i\right) \right] \\
	&= \ex_{01}\left[  \ex\left(Y_{i11} \mid S_i=1, T_i=1, C_i=1, W_i\right) \right] \tag*{by Assumption \eqref{si_treat}} \\
	&= \ex_{01}\left[  \ex\left(Y_{i11} \mid S_i=1, D_i=1, W_i\right)\right]. 
	\end{align*}
	Intuitively, conditioning on $W_i$ makes sample selection ignorable under Assumption \eqref{si_treat}. This is the critical connector between the third and fourth lines of the first expectation derivation.
	
	\begin{align*}
	\ex\left(Y_{i00} \mid S_i=0, D_i=1\right) &= \ex\left(Y_{i10} \mid S_i=0, D_i=1\right) \tag*{by Assumption \eqref{consistency}} \\
	&= \ex\left(Y_{i10} \mid S_i=0, T_i=1, C_i=1\right) \tag*{by Assumption \eqref{monotonicity}} \\
	&= \ex_{01}\left[  \ex\left(Y_{i10} \mid S_i=1, T_i=1, C_i=1, W_i\right) \right] \tag*{by Assumption \eqref{si_ctrl}} \\
	&= \ex_{01}\left[  \ex\left(Y_{i10} \mid S_i=1, T_i=0, C_i=1, W_i\right) \right] \tag*{by Assumption \eqref{compl}}.
	\end{align*}
	The last line follows because Assumption \eqref{compl} allows us to use RCT controls who would have complied had they been assigned to treatment. Finally, the result follows by plugging these two expressions into Eq.~\eqref{tpattc}.
\end{proof}

\section{Tables \& Figures}

\begin{figure}[htbp]
	\begin{center}
		\includegraphics[width = 1\textwidth]{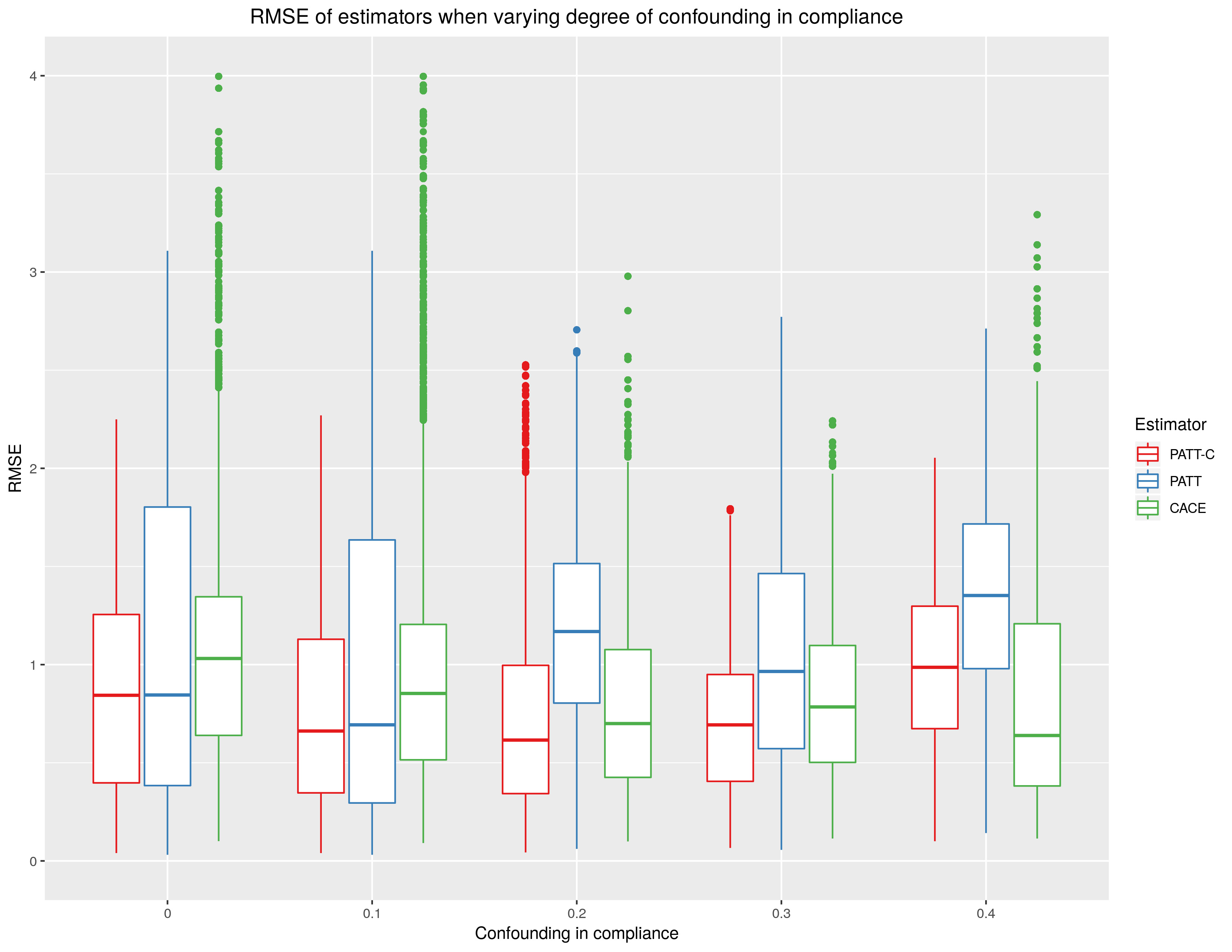}
		\caption{Average RMSE according to degree of confounding in compliance.\label{fig:rmse_boxplots_RateConC}}
	\end{center}
\end{figure}

\begin{figure}[htbp]
	\begin{center}
		\includegraphics[width = 1\textwidth]{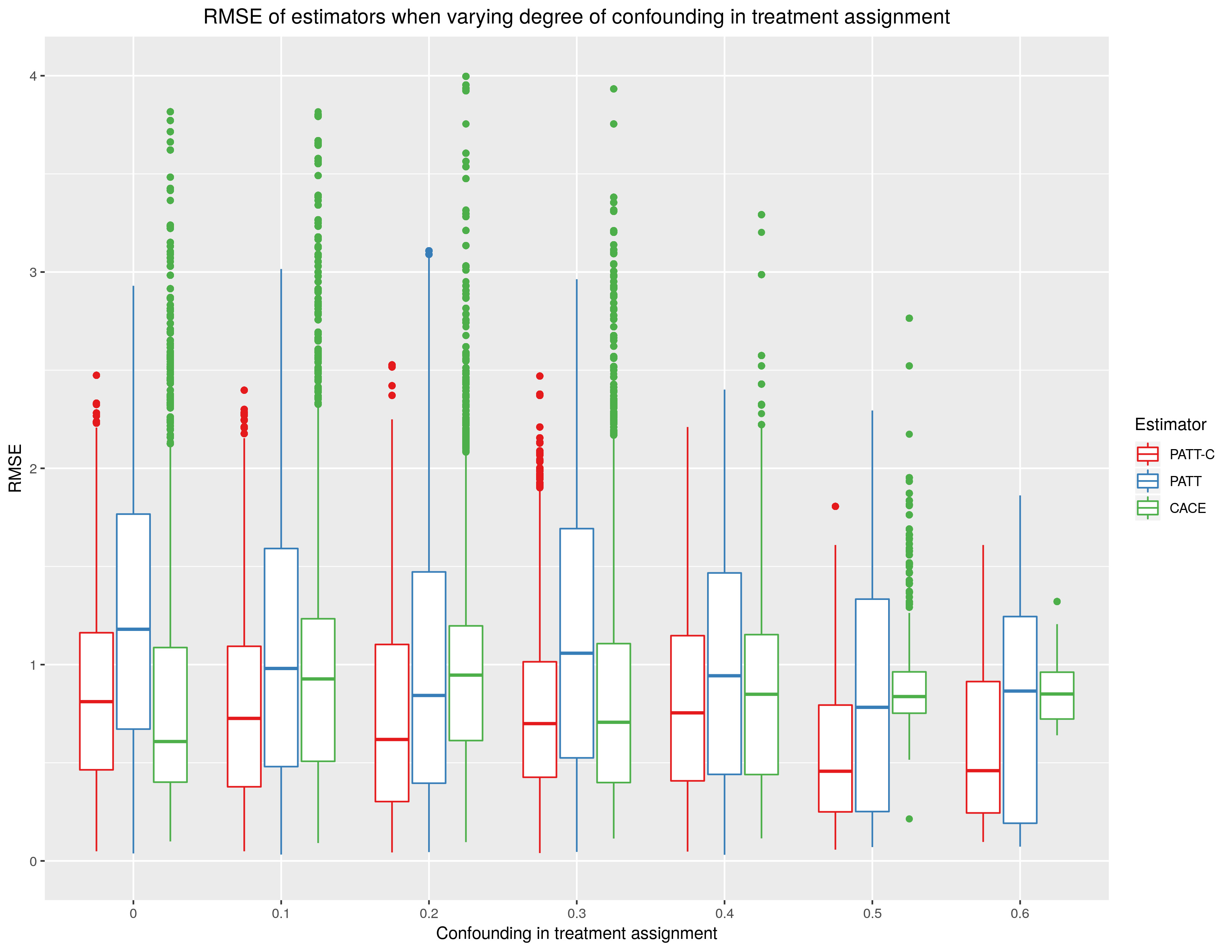}
		\caption{Average RMSE according to degree of confounding in treatment assignment.\label{fig:rmse_boxplots_RateConT}}
	\end{center}
\end{figure}

\begin{figure}[htbp]
	\begin{center}
		\includegraphics[width = 1\textwidth]{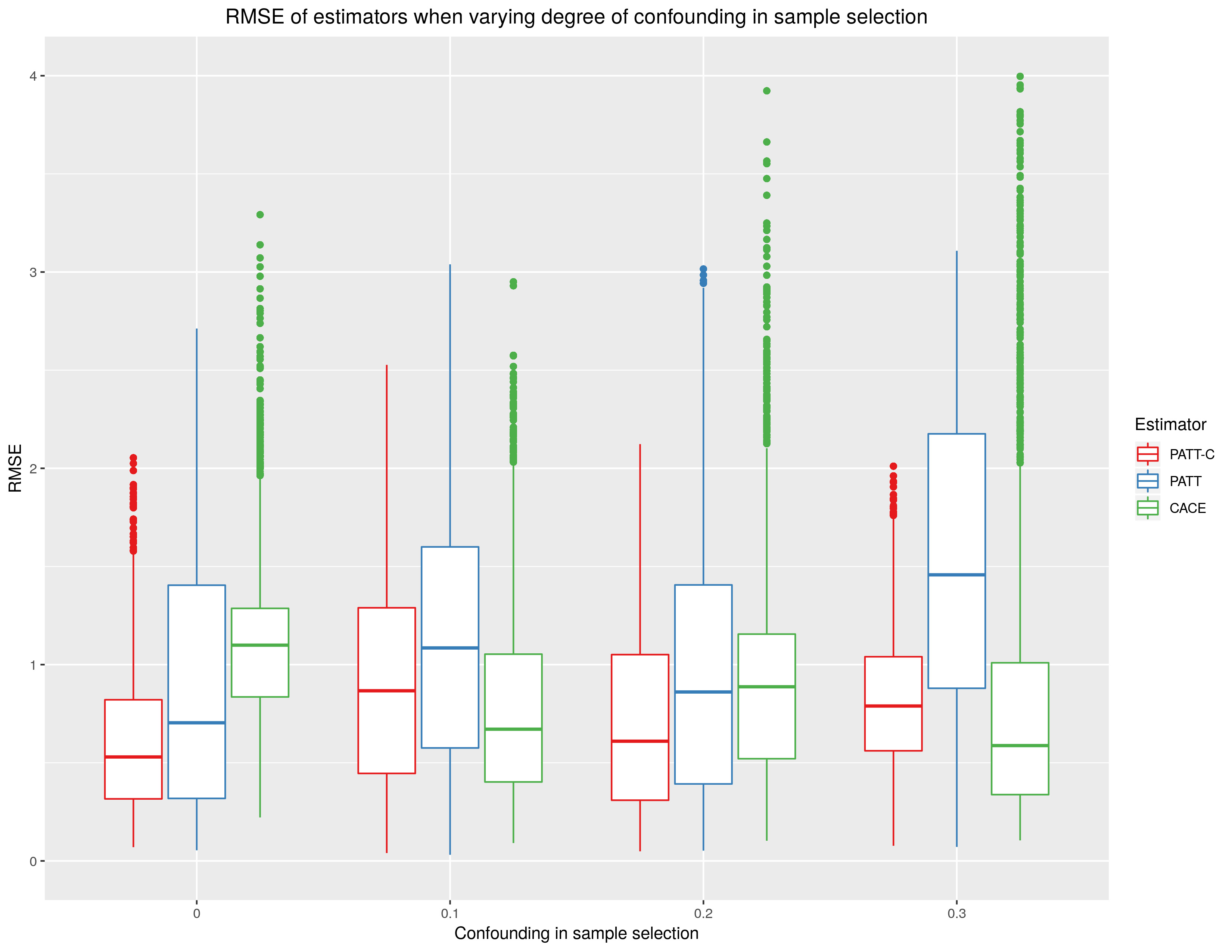}
		\caption{Average RMSE according to degree of confounding in sample selection.\label{fig:rmse_boxplots_RateConS}}
	\end{center}
\end{figure}

\begin{table}[htbp]
	\centering
	\begin{threeparttable}
		\caption{Distribution of MSE for compliance ensemble.\label{compliance-ensemble}} 
		\begin{tabular}{lcccc}
			\hline
			Algorithm & Mean & SE & Min. & Max. \\ 
			\hline
			Super learner (\texttt{SuperLearner}) & 0.22  & 0.001 & 0.21 & 0.23 \\
			\hline
			Lasso regression (\texttt{glmnet})  & 0.22  & 0.001 & 0.21 & 0.23 \\
			Random forests (\texttt{randomForest}) & 0.27  & 0.002 & 0.25 & 0.29 \\
			Ridge regression (\texttt{glmnet}) & 0.22  & 0.001 & 0.21 & 0.23 \\
			\hline
		\end{tabular}
	  \begin{tablenotes}[flushleft]
		\small
	\item MSE is 10-fold cross-validated error for super learner ensemble and candidate algorithms. \textsf{R} package used for implementing each algorithm in parentheses.
\end{tablenotes}

\end{threeparttable}
\end{table}

\begin{singlespace}
	\centering
			\begin{threeparttable}
	\begin{longtable}{lllllll}
		\caption{Pretreatment covariates and responses for OHIE and NHIS respondents by Medicaid coverage status.\label{rct-nrt-compare}} \\
		& OHIE &  & OHIE &  & NHIS &  \\ 
		& no Medicaid &  & Medicaid &  &Medicaid &   \\ 
		& $n=13,573$ &  & $n=5,547$ &  & $n=6,256$ &  \\  
		\hline   
		\hline   
		\textbf{Covariate} &  N & $\mathbf{\%}$ & N & $\mathbf{\%}$ & N & $\mathbf{\%}$ \\ 
		\hline
		\textit{Age:} &  & & &  &  & \\ 
		\hspace{3mm}19-49 & 4,166 & 31.0 & 1,473 & 26.9 & 4,322 & 69.2  \\ 
		
		\hspace{3mm}50-64 & 9,407 & 69.0 & 4,074 & 73.1 & 1,934 & 30.8 \\
			&  & & &  &  & \\ 

		\textit{Education:} &  & & &  &  & \\  
		\hspace{3mm}Less than high school  & 2,181 & 15.8 & 1,028 & 18.3 & 1,941 & 30.8  \\ 
		
		\hspace{3mm}High school diploma or GED & 6,948 & 50.9 & 2,948 & 52.7 & 2,074 & 33.3   \\ 
		
		\hspace{3mm}Voc. training / 2-year degree & 2,889 & 21.5 & 1,118 & 20.8 & 1,809 & 29.1  \\ 
		
		\hspace{3mm}4-year college degree or more & 1,555 & 11.8 & 453 & 8.2 & 432 & 6.8  \\ 
		&  & & &  &  & \\ 

		\textit{Ethnicity:} &  & & &  &  & \\ 

		\hspace{3mm}Black & 193 & 3.1 & 247 & 3.5 & 1,723 & 27.0  \\ 
		
		\hspace{3mm}Hispanic &  1,242 & 9.3 & 509 & 9.0 & 1,570 & 24.9  \\ 
				
		\hspace{3mm}White & 11,453 & 84.2 & 4,675 & 84.7 & 3,899 & 63.0  \\ 
		
		&  & & &  &  & \\ 
		\textit{Gender:} &  & & &  &  & \\ 
		\hspace{3mm} Female & 7,807 & 57.8 & 3,280 & 59.3 & 4,285 & 68.4 \\ 
		&  & & &  &  & \\ 
		\textit{Health status:} &  & & &  &  & \\ 
		
		\hspace{3mm}Asthma & 2,114 & 15.6 & 1,004 & 18.3 & 1,269 & 20.1   \\ 
		
		\hspace{3mm}Diabetes & 1,484 & 10.7 & 595 & 11.0 & 864 & 14.0  \\ 
				
		\hspace{3mm}Heart condition & 302 & 2.1 & 178 & 3.1 & 529 & 8.5 \\ 
		
		\hspace{3mm}High blood pressure (HBP) & 3,740 & 27.4 & 1,533 & 28.1 & 2,162 & 34.9  \\ 	
		&  & & &  &  & \\ 
		\textit{Income:} &  & & &  &  & \\ 
		\hspace{3mm} $<\$10$k & 6,003 & 43.7 & 3,488 & 62.6 & 2,587 & 41.8  \\
		
		\hspace{3mm} \$10k-\$25k & 5,583  & 41.7 & 1,678 & 30.5 & 3,095 & 49.2 \\
		
		\hspace{3mm} $>\$25$k & 1,987 & 14.6 & 381 & 6.9 & 574 & 9.0   \\		
			
		\hline
		\hline
		\textbf{Responses} &  Mean & S.d. & Mean & S.d. & Mean & S.d. \\  
		\hspace{3mm}$\#$ ER visits &  0.45 & 1.00 & 0.40 & 0.85 & 0.25 & 0.54  \\  
		\hspace{3mm}$\#$ outpatient visits & 1.98 & 2.97 & 1.98 & 2.92 & 1.07 & 1.21 \\
		\hline
		\hline
	\end{longtable}
  \begin{tablenotes}[flushleft]
	\small
	\item \emph{Notes:} percentages calculated using OHIE or NHIS survey weights.
\end{tablenotes}
\end{threeparttable}
\end{singlespace}

\begin{table}[htbp]
	\centering
	\begin{threeparttable}
		\caption{Placebo tests.\label{placebo}} 
	\begin{tabular}{@{}lccccc@{}}
		\toprule
		Outcome                & RCT complier & Adjusted population & Weighted & TOST & Two-sided\\
				               &   weighted mean 		  & weighted mean 	& 	difference-in-means	   &  $p$-value & $p$-value \\		
		 \midrule
		$\#$ ER vists          & 0.449              & 0.466                           & -0.017      & $< 0.001$ & 0.202 \\	 
		$\#$ outpatient visits & 1.966              & 2.029                           & -0.063      & $< 0.001$ & 0.090 \\ \bottomrule
	\end{tabular}
  \begin{tablenotes}[flushleft]
	\small
\item \emph{Notes:} $p$-values correspond to survey-weighted difference-in-means estimated from bootstrapped standard errors, which are clustered on the household. TOST $p$-value evaluates equivalence between the weighted distributions, and the two-sided $p$-value evaluates difference between the weighted distribution.
\end{tablenotes}

\end{threeparttable}
\end{table}

\begin{table}[htbp]
	\centering
	\begin{threeparttable}
	\caption{Distribution of OHIE participants by status of treatment assignment ($T_i$) and treatment received ($D_i$).\label{ohie-status}} 
	\begin{tabular}{@{}lccc@{}}
		\toprule
		& $D_i = 0$ & $D_i = 1$ & N      \\ \midrule
		$T_i = 0$ & 8,343    & 1,265     & 9,608 \\
		$T_i = 1$ & 5,230     & 4,282     & 9,512 \\
		N         & 13,573    & 5,547     & 19,120 \\ \bottomrule
	\end{tabular}
\end{threeparttable}
\end{table}

\begin{table}[htbp]
\centering
\begin{threeparttable}
\caption{Error and weights for candidate algorithms in response ensemble for RCT compliers.\label{reponse-ensemble}}  
  \begin{tabularx}{\linewidth}{l*{3}{Y}}
    \toprule
    \multicolumn{3}{l}{\textbf{$\#$ ER visits}} \\
    \midrule
Algorithm  & MSE & Weight \\ 
\hline
Additive regression, $\text{degree} = 2$ (\texttt{gam})  & 0.81 & 0 \\
Additive regression, $\text{degree} = 3$ (\texttt{gam})  & 0.81 & 0 \\ 
Additive regression, $\text{degree} = 4$ (\texttt{gam})  & 0.81 & 0 \\ 
Boosted regression (\texttt{gbm})  & 0.79 & 0 \\
Lasso regression (\texttt{glmnet})  & 0.79 & 0.76 \\ 
Random forests, $\# preds. = 1$ (\texttt{randomForest}) & 0.79 & 0.18 \\ 
Random forests, $\# preds. = 10$ (\texttt{randomForest}) & 0.83 & 0.06 \\ 
Regularized linear regression, $\alpha=0.25$ (\texttt{glmnet})  & 0.79 & 0 \\ 
Regularized linear regression, $\alpha=0.5$ (\texttt{glmnet})  & 0.79 & 0 \\ 
Regularized linear regression, $\alpha=0.75$ (\texttt{glmnet})  & 0.79 & 0 \\ 
Ridge regression (\texttt{glmnet})  & 0.79 & 0 \\ 
  \end{tabularx}
  \begin{tabularx}{\linewidth}{l*{3}{Y}}
	\toprule
	\multicolumn{3}{l}{\textbf{$\#$ outpatient visits}} \\
	\midrule
	Algorithm  & MSE & Weight \\ 
	\hline
	Additive regression, $\text{degree} = 2$ (\texttt{gam})  & 6.60 & 0 \\
	Additive regression, $\text{degree} = 3$ (\texttt{gam})  & 6.60 & 0 \\ 
	Additive regression, $\text{degree} = 4$ (\texttt{gam})  & 6.60 & 0 \\ 
	Boosted regression (\texttt{gbm})  & 6.48 & 0 \\
	Lasso regression (\texttt{glmnet})  & 6.47 & 0.87 \\ 
	Random forests, $\# preds. = 1$ (\texttt{randomForest}) & 6.47 & 0 \\ 
	Random forests, $\# preds. = 10$ (\texttt{randomForest}) & 6.73 & 0.13 \\ 
	Regularized linear regression, $\alpha=0.25$ (\texttt{glmnet})  & 6.47 & 0 \\ 
	Regularized linear regression, $\alpha=0.5$ (\texttt{glmnet})  & 6.47 & 0 \\ 
	Regularized linear regression, $\alpha=0.75$ (\texttt{glmnet})  & 6.47 & 0 \\ 
	Ridge regression (\texttt{glmnet})  & 6.47 & 0 \\ 
	\hline
	\bottomrule
\end{tabularx}
  \begin{tablenotes}[flushleft]
	\small
\item \emph{Notes:} cross-validated error and weights used for each algorithm in super learner ensemble. \textit{MSE} is the ten-fold cross-validated mean squared error for each algorithm. \textit{Weight} is the coefficient for super learner, which is estimated using non-negative least squares based on the Lawson-Hanson algorithm. \textsf{R} package used for implementing each algorithm in parentheses. $\# preds.$ is the number of predictors randomly sampled as candidates in each decision tree in random forests algorithm. $\alpha$ is a parameter that mixes L1 and L2 norms. $\text{degree}$ is the smoothing term for smoothing splines.
\end{tablenotes}

\end{threeparttable}
\end{table}

\pagebreak

\begin{table}[htbp]
\centering
\begin{threeparttable}
\caption{Error and weights for candidate algorithms in response ensemble for all RCT participants.\label{reponse-ensemble-unadj}}  
\begin{tabularx}{\linewidth}{l*{3}{Y}}
	\toprule
	\multicolumn{3}{l}{\textbf{$\#$ ER visits}} \\
	\midrule
	Algorithm  & MSE & Weight \\ 
	\hline
	Additive regression, $\text{degree} = 2$ (\texttt{gam})  & 0.76 & 0 \\ 
	Additive regression, $\text{degree} = 3$ (\texttt{gam})  & 0.76 & 0 \\ 
	Additive regression, $\text{degree} = 4$ (\texttt{gam})  & 0.76 & 0 \\ 
	Boosted regression (\texttt{gbm})  & 0.75 & 0 \\
	Lasso regression (\texttt{glmnet})  & 0.75 & 0 \\ 
	Random forests, $\# preds. = 1$ (\texttt{randomForest}) & 0.75 & 1 \\ 
	Random forests, $\# preds. = 10$ (\texttt{randomForest}) & 0.79 & 0 \\ 
	Regularized linear regression, $\alpha=0.25$ (\texttt{glmnet})  & 0.75 & 0 \\ 
	Regularized linear regression, $\alpha=0.5$ (\texttt{glmnet})  & 0.75 & 0 \\ 
	Regularized linear regression, $\alpha=0.75$ (\texttt{glmnet})  & 0.75 & 0 \\ 
	Ridge regression (\texttt{glmnet})  & 0.75 & 0 \\ 
\end{tabularx}
\begin{tabularx}{\linewidth}{l*{3}{Y}}
	\toprule
	\multicolumn{3}{l}{\textbf{$\#$ outpatient visits}} \\
	\midrule
	Algorithm  & MSE & Weight \\ 
	\hline
	Additive regression, $\text{degree} = 2$ (\texttt{gam})  & 6.44 & 0 \\ 
	Additive regression, $\text{degree} = 3$ (\texttt{gam})  & 6.44 & 0 \\ 
	Additive regression, $\text{degree} = 4$ (\texttt{gam})  & 6.44 & 0 \\ 
	Boosted regression (\texttt{gbm})  & 6.39 & 0 \\
	Lasso regression (\texttt{glmnet})  & 6.39 & 0.87 \\ 
	Random forests, $\# preds. = 1$ (\texttt{randomForest}) & 6.39 & 0 \\ 
	Random forests, $\# preds. = 10$ (\texttt{randomForest}) & 6.61 & 0.13 \\ 
	Regularized linear regression, $\alpha=0.25$ (\texttt{glmnet})  & 6.39 & 0 \\ 
	Regularized linear regression, $\alpha=0.5$ (\texttt{glmnet})  & 6.39 & 0 \\ 
	Regularized linear regression, $\alpha=0.75$ (\texttt{glmnet})  & 6.39 & 0 \\ 
	Ridge regression (\texttt{glmnet})  & 6.39 & 0 \\ 
	\hline
	\bottomrule
\end{tabularx}
  \begin{tablenotes}[flushleft]
	\small
	\item See notes to Table~\ref{reponse-ensemble}.
\end{tablenotes}
\end{threeparttable}
\end{table}

\end{appendices}

\itemize

\begin{thebibliography}{10}
	
	\bibitem{ImaKinStu08}
	K.~Imai, G.~King, and E.~Stuart, Misunderstandings among experimentalists and
	observationalists about causal inference, J. R. Stat. Soc. Ser. A.
		Stat. Soc.~171 (2008), no.~2, 481--502.
	
	\bibitem{stuart2011use}
	E.~A. Stuart, S.~R. Cole, C.~P. Bradshaw, and P.~J. Leaf, The use of
	propensity scores to assess the generalizability of results from randomized
	trials, J. R. Stat. Soc. Ser. A. Stat. Soc. 174 (2011), no.~2, 369--386.
	
	\bibitem{Hartman}
	E.~Hartman, R.~Grieve, R.~Ramsahai, and J.~S. Sekhon, From {SATE} to {PATT}:
	Combining experimental with observational studies to estimate population
	treatment effects, J. R. Stat. Soc. Ser. A. Stat. Soc. 178 (2015), no.~3, 757--778.
	
	\bibitem{balke1997bounds}
	A.~Balke and J.~Pearl, Bounds on treatment effects from studies with
	imperfect compliance, J. Am. Stat. Assoc. 92 (1997), no.~439, 1171--1176.
	
	\bibitem{imai2013experimental}
	K.~Imai, D.~Tingley, and T.~Yamamoto, Experimental designs for identifying
	causal mechanisms, J. R. Stat. Soc. Ser. A. Stat. Soc. 176 (2013), no.~1, 5--51.
	
	\bibitem{frangakis2002principal}
	C.~E. Frangakis and D.~B. Rubin, Principal stratification in causal
	inference, Biometrics 58 (2002), no.~1, 21--29.
	
	\bibitem{frumento2012evaluating}
	P.~Frumento, F.~Mealli, B.~Pacini, and D.~B. Rubin, Evaluating the effect of
	training on wages in the presence of noncompliance, nonemployment, and
	missing outcome data, J. Am. Stat. Assoc. 107 (2012), no.~498, 450--466.
	
	\bibitem{yau2001inference}
	L.~H. Yau and R.~J. Little, Inference for the complier-average causal effect
	from longitudinal data subject to noncompliance and missing data, with
	application to a job training assessment for the unemployed, J. Am.
		Stat. Assoc. 96 (2001), no.~456, 1232--1244.
	
	\bibitem{imbens2015causal}
	G.~W. Imbens and D.~B. Rubin, Causal Inference in Statistics, Social, and
		Biomedical Sciences (2015), Cambridge: Cambridge University Press.
	
	\bibitem{rubin1990}
	D.~B. Rubin, Comment: Neyman (1923) and causal inference in experiments and
	observational studies, Stat. Sci. 5 (1990), no.~4, 472--480.
	
	\bibitem{rothwell2005external}
	P.~M. Rothwell, External validity of randomised controlled trials: `{T}o whom
	do the results of this trial apply?', The Lancet 365 (2005), no.~9453, 82--93.
	
	\bibitem{Angrist1996}
	J.~D. Angrist, G.~W. Imbens, and D.~B. Rubin, Identification of causal
	effects using instrumental variables, J. Am. Stat. Assoc. 91 (1996), no.~434, 444--455.
	
	\bibitem{pearl1995causal}
	J.~Pearl, Causal diagrams for empirical research, Biometrika 82 (1995), no.~4, 669--688.
	
	\bibitem{perkins2006inconsistency}
	N.~J. Perkins and E.~F. Schisterman, The inconsistency of `optimal'
	cut-points using two ROC based criteria, Am. J. Epidemiol. 163 (2006), no.~7, 670--675.
	
	\bibitem{van2007super}
	M.~J. Van~der Laan, E.~C. Polley, and A.~E. Hubbard, Super learner, Stat. Appl. Genet. Mol. Biol. 6 (2007), no.~1, 1--21.
		
	\bibitem{polley2011super}
	E.~C. Polley, S.~Rose, and M.~J. Van~der Laan, Super learning, in Targeted learning (2011), 43--66, New York: Springer.
	
	\bibitem{buja1989linear}
	A.~Buja, T.~Hastie, and R.~Tibshirani, Linear smoothers and additive
	models, Ann. Stat. 17 (1989), no.~2, 453--510.
	
	\bibitem{friedman2001greedy}
	J.~H. Friedman, Greedy function approximation: {A} gradient boosting
	machine, Ann. Stat. 29 (2001), no.~5, 1189--1232.
	
	\bibitem{tibshirani2012strong}
	R.~Tibshirani, J.~Bien, J.~Friedman, T.~Hastie, N.~Simon, J.~Taylor, and R.~J.
	Tibshirani, Strong rules for discarding predictors in lasso-type
	problems, J. R. Stat. Soc. Ser. B Stat. Methodol. 74 (2012), no.~2, 245--266.
	
	\bibitem{breiman2001}
	L.~Breiman, Random forests, Mach. Learn. 45 (2001), no.~1, 5--32.
	
	\bibitem{imbens1997bayesian}
	G.~W. Imbens and D.~B. Rubin, Bayesian inference for causal effects in
	randomized experiments with noncompliance, Ann. Stat. 25 (1997), no.~1, 305--327.
	
	\bibitem{imbens1994identification}
	G.~W. Imbens and J.~D. Angrist, Identification and estimation of local
	average treatment effects, Econometrica 62 (1994), no.~2, 467--475.
	
	\bibitem{freedman2006}
	D.~A. Freedman, Statistical models for causation: What inferential leverage do
	they provide?, Eval. Rev. 30 (2006), no.~6, 691--713.
	
	\bibitem{finkelstein2012}
	A.~Finkelstein, S.~Taubman, B.~Wright, M.~Bernstein, J.~Gruber, J.~P. Newhouse,
	and H.~Allen, The {O}regon health insurance experiment: {E}vidence from the
	first year, Q. J. Econ. 127 (2012), no.~3, 1057--1106.
	
	\bibitem{baicker2013}
	K.~Baicker, S.~L. Taubman, H.~L. Allen, M.~Bernstein, J.~H. Gruber, J.~P.
	Newhouse, E.~C. Schneider, B.~J. Wright, A.~M. Zaslavsky, and A.~N.
	Finkelstein, The {O}regon experiment—effects of Medicaid on clinical
	outcomes, N. Engl. J. Med. 368 (2013), no.~18, 1713--1722.
	
	\bibitem{baicker2014}
	K.~Baicker, A.~Finkelstein, J.~Song, and S.~Taubman, The impact of {M}edicaid
	on labor market activity and program participation: {E}vidence from the
	{O}regon health insurance experiment, Am. Econ. Rev. 104 (2014), no.~5, p.~322--328.
	
	\bibitem{Taubman}
	S.~Taubman, H.~Allen, B.~Wright, K.~Baicker, and A.~Finkelstein, Medicaid
	increases emergency department use: Evidence from {O}regon's health insurance
	experiment, Science 343 (2014), no.~6168, 263--268.
	
	\bibitem{miller2012effect}
	S.~Miller, The effect of insurance on emergency room visits: An analysis of
	the 2006 {M}assachusetts health reform, J. Public Econ. 96 (2012),
	no.~11-12, 893--908.
	
	\bibitem{kolstad2012impact}
	J.~T. Kolstad and A.~E. Kowalski, The impact of health care reform on
	hospital and preventive care: {E}vidence from {M}assachusetts, J. Public
		Econ. 96 (2012), no.~11-12, 909--929.
	
	\bibitem{NBERw22363}
	A.~E. Kowalski, ``Doing more when you're running {LATE}: Applying marginal
	treatment effect methods to examine treatment effect heterogeneity in
	experiments,'' NBER Work. Pap. Ser. 22363 (2016).
	
	\bibitem{BRFSS}
	Centers for Disease Control and Prevention. Behavioral Risk Factor Surveillance System Survey Data (2016). 
	Atlanta, Georgia: U.S. Department of Health and Human Services, Centers for Disease Control and Prevention.
	
	\bibitem{NHIS}
	Centers for Disease Control and Prevention. National Health Interview Survey (2019). 
	Atlanta, Georgia: U.S. Department of Health and Human Services, Centers for Disease Control and Prevention.
	
	\bibitem{parsons2014design}
	V.~L. Parsons, C.~L. Moriarity, K.~Jonas, T.~F. Moore, K.~E. Davis, and
	L.~Tompkins, Design and estimation for the {N}ational {H}ealth
		{I}nterview {S}urvey, 2006--2015 (2014). 
		Atlanta, Georgia: U.S. Department of Health and Human Services, Centers for Disease Control and Prevention.
	
	\bibitem{berger1996bioequivalence}
	R.~L. Berger and J.~C. Hsu, Bioequivalence trials, intersection-union tests
	and equivalence confidence sets, Stat. Sci. 11 (1996), no.~4, 283--319.
	
\end{thebibliography}
\end{document}